\newcommand{\define}{\stackrel{\triangle}{=}}
\newtheorem{theorem}{\bf Theorem}
\newtheorem{corollary}{\bf Corollary}
\newtheorem{lemma}{\bf Lemma}
\newcommand{\Kappa}{\mathcal{K}}
\begin{document}
\title{Interference Alignment and a Noisy Interference Regime for Many-to-One Interference Channels}

\author{\authorblockN{Viveck R. Cadambe, Syed A. Jafar}\\
\authorblockA{Center for Pervasive Communications and Computing\\Electrical Engineering and Computer Science\\
University of California Irvine, \\
Irvine, California, 92697, USA\\
Email: {vcadambe@uci.edu, syed@uci.edu}\\ \vspace{-1cm}}}

\maketitle
\vspace{10pt}
\begin{abstract} 
We study the capacity of discrete memoryless many-to-one interference channels, i.e., $K$ user interference channels where only one receiver faces interference. For a class of many-to-one interference channels, we identify a noisy interference regime, i.e., a regime where random coding and treating interference as noise achieves sum-capacity. Specializing our results to the Gaussian MIMO many-to-one interference channel, which is a special case of the class of channels considered, we obtain new capacity results. Firstly, while previous results characterized noisy interference regimes for many-to-one interference channels with inputs having average power constraints, we show that this remains valid for a more general class of inputs. This more general class of inputs includes the practical scenario of the inputs being restricted to fixed finite-size constellations such as PSK or QAM. Secondly, we extend noisy interference results previously studied in interference channels with single antenna nodes at all transmitters, to MIMO and parallel many-to-one interference channels. Finally, while previous results considered the Gaussian interference channel with full channel state information (CSI) at all nodes, we provide a noisy interference regime for fading Gaussian many-to-one interference channels without CSI at the transmitters.

While the many-to-one interference channel requires interference alignment, which in turn requires structured codes in general, we argue that in the noisy interference regime, interference is implicitly aligned by random coding irrespective of the input distribution. As a byproduct of our study, we identify a second class of many-to-one interference channels (albeit deterministic) where random coding is optimal (though interference is not treated as noise). We attribute the optimality of random coding in this second class of channels to the resolvability of the multiple interferers at the receiver which precludes the possibility of interference alignment and hence obviates the need of structured codes.
\end{abstract}

\section{Introduction}
The idea of \emph{interference alignment} has been recently discovered to play a significant role in the characterization of capacity of wireless interference networks \cite{MMK, Jafar_Shamai, Cadambe_Jafar_int,Bresler_Parekh_Tse}. Interference alignment is the idea that signals are designed so that they overlap at receivers where they cause interference while remaining distinguishable at receivers where they are desired.  In network communication scenarios where receivers face interference from multiple sources, alignment compacts the space occupied by the multiple interfering signals and results in increased rates for the messages desired at the receiver. Therefore, optimal code design for interference channels typically involves a conflict between the need for interference management via alignment at undesired receivers, and the need to maximize rates at the desired receiver. This conflict is clearly reflected in the contrast between interference channels, and channels that do not require alignment, viz. point-to-point, multiple access (MAC) and broadcast (BC) channels.  For instance, it is well known that the identically and independently distributed (i.i.d.) circularly symmetric Gaussian distributions on the inputs, which maximizes the differential entropy of the received signals, achieve capacity in the point to point, MAC and BC channels. However, in contrast, in interference channels, asymmetric complex signaling \cite{Cadambe_Jafar_Wang}, and structured (lattice) codes \cite{Cadambe_Jafar_Shamai, Bresler_Parekh_Tse, Sridharan_Jafarian_Vishwanath_Jafar, Etkin_Ordentlich, Motahari_DOFint} have been shown to be useful, especially because they align interference. In fact, the lack of a complete understanding of the limits of interference alignment is among the primary hurdles in capacity characterizations of wireless networks. In this paper, we will provide a finer understanding of interference alignment, and characterize the sum-capacity of a class of discrete memoryless (many-to-one) interference channels.

The results demonstrating the need for explicit interference alignment via lattice-coding/asymmetric complex signaling contrast with the ``noisy interference'' results for the Gaussian $K$ user interference channel found recently, presented in references \cite{Sreekanth_Veeravalli,Motahari_Khandani, Shang_Kramer_Chen}. These references showed that for the $K$ user interference channel, if the channel gains satisfy certain conditions, then, using \emph{random} codebooks with circularly symmetric Gaussian distributions for all messages and treating interference as noise at all receivers is sum-capacity optimal. Put differently, these results indicate that in certain scenarios, explicit alignment in the form of structured (lattice) coding or asymmetric signaling is not necessary, and random coding is optimal, even though there is potential for alignment with receivers facing multiple interferers. One of the main goals of this work is a better understanding of why random (Gaussian) coding is optimal in the noisy interference regime, in spite of the opportunity for alignment. We study this question in the setting of the discrete memoryless many-to-one interference channel - the interference channel where only one receiver faces interference - which is the simplest setting where a receiver faces multiple interferers. The main result of this work is the characterization of a noisy interference regime for a class of discrete memoryless many-to-one interference channels. The noisy interference condition obtained here can be loosely described as follows: \\\emph{ In the many-to-one interference channel, if the effective interference (with noise) seen by the only receiver facing interference is a stochastically degraded version of the set of received signals at all other receivers, then, random coding at all the transmitters and interference being treated as noise at the receiver facing interference achieves sum-capacity.}\\ \noindent The above result, which will be expressed rigorously later (Section \ref{sec:result}), holds for a broad class of discrete memoryless many-to-one interference channels including the \emph{Gaussian} many-to-one interference channel. From the perspective of Gaussian interference channels, we make two observations. Firstly, our main result captures the noisy interference regime for the single-antenna Gaussian many-to-one interference channel found previously in references \cite{Sreekanth_Veeravalli, Motahari_Khandani, Shang_Kramer_Chen}. Secondly, while results of references \cite{Sreekanth_Veeravalli, Motahari_Khandani, Shang_Kramer_Chen} are mainly restricted to the single-antenna Gaussian interference channels with \emph{classical} assumptions on the model, such as full channel state information (CSI) at all nodes and average power constraints on the inputs, our result described above holds for a broader class of discrete memoryless many-to-one interference channels, and is therefore more robust to the system model. Specializing our main result to the Gaussian setting enables us to extend the noisy interference results to scenarios of practical importance not captured by such classical assumptions and thus not previously considered. Before summarizing such extensions, we first describe how our main result summarized (in italics) above captures the noisy interference regime for many-to-one interference channels previously discovered in \cite{Sreekanth_Veeravalli,Motahari_Khandani,Shang_Kramer_Chen}. 

Consider a $K$ user Gaussian many-to-one interference channel (Fig. \ref{fig:gaussian}), whose inputs and outputs can be expressed as
\begin{figure*}%[!tbp]
\begin{center}
\epsfig{file=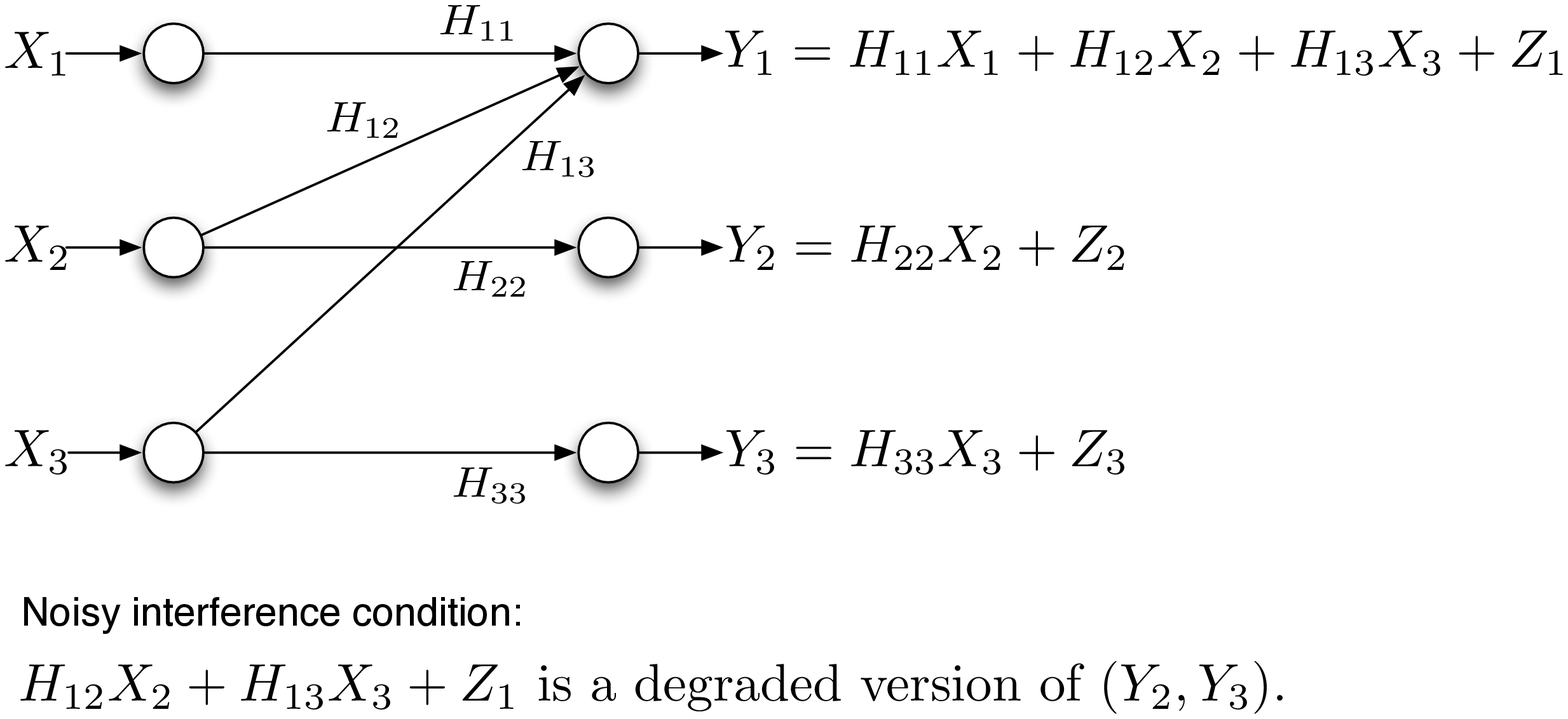, height=2.3 in, width=5 in}
\caption{The $3$-user Gaussian Many-to-one Interference Channel and the Noisy Interference Condition.}
\label{fig:gaussian}
\end{center}
\end{figure*}
\begin{eqnarray*}
{Y}_i(\tau)&=&{H}_{ii} {X}_i(\tau) + {Z}_i(\tau), i =2,3,\ldots,K \\ 
{Y}_1(\tau)&=& \sum_{i=1}^{K} {H}_{1i} {X}_i(\tau) + {Z}_1(\tau), i \in \Kappa_1 
\end{eqnarray*}
where, corresponding to the $\tau$th symbol, $X_i(\tau)$ is the complex scalar input at Transmitter $i$, $Y_j(\tau)$ is the complex scalar output at Receiver $j$ and $Z_{j}(\tau)$ represents the zero-mean unit-variance circularly symmetric additive white Gaussian noise (AWGN) variable at Receiver $j$. $H_{ji}$ is a complex scalar representing the channel gain between Transmitter $i$ and Receiver $j$. As is standard in the interference channel, Transmitter $i$ has a message to Receiver $i$, which is independent of the message at, and unknown to other transmitters (and unknown to all receivers prior to communication). Then, in this channel, with an average power constraint on the inputs, it is shown in \cite{Sreekanth_Veeravalli, Motahari_Khandani, Shang_Kramer_Chen}, that if 
\begin{equation}\sum_{j=2}^{K} \frac{|H_{1j}|^2}{|H_{jj}|^2} \leq 1,\label{eq:classicalnoisy} \end{equation}
then, circularly symmetric Gaussian inputs and treating interference as noise at all receivers is sum-capacity optimal. We note that the condition in (\ref{eq:classicalnoisy}) is a special case of the conditions stated in our main result above (in italics), i.e., when (\ref{eq:classicalnoisy}) holds, the effective interference at Receiver $1$, $V=\sum_{j=2}^{K} H_{1j}X_j+Z_1$, is a stochastically degraded version of the set of signals received at all other receivers, i.e., $(Y_2,Y_3,\ldots,Y_K)$. This is because $V$ is a degraded version of $\sum_{i=2}^{K}\frac{H_{1j}}{H_{jj}} Y_j$, which is, obviously, a degraded version of $(Y_2,Y_3,\ldots,Y_K)$. Thus, we have shown that, the noisy interference regime of \cite{Sreekanth_Veeravalli, Motahari_Khandani, Shang_Kramer_Chen} is included in the noisy interference regime found in our main result. In Appendix \ref{app:equivalence}, we show that, the two regimes - the regime described in (\ref{eq:classicalnoisy}) and the regime described by our main result - are in fact equivalent. While our results are equivalent\footnote{We have so far only discussed the optimality of random coding and treating interference as noise in our main result. References \cite{Sreekanth_Veeravalli,Motahari_Khandani, Shang_Kramer_Chen} also show the optimality using the circularly symmetric Gaussian distribution in the noisy interference regime; this optimality will be shown for Gaussian channels in our characterization as well in a formal description of our result in Section \ref{sec:result}.} to the results of previous works in the context of the classical single-antenna Gaussian many-to-one interference channels, as mentioned earlier, our result holds for a more general class of Gaussian many-to-one interference channels. Specifically, our results extend the noisy interference regime to many-to-one interference channels beyond the classical assumptions. We summarize such extensions below.
%It must be noted that the results of these references, described in Equation (\ref{eq:classicalnoisy}), are mainly restricted to the Gaussian interference channel with classical assumptions on the model, such as full channel state information (CSI) at all nodes, and average power constraints on the inputs. In contrast, our result described above is presented for a broader class of discrete memoryless many-to-one interference channels, and is therefore more robust to the system model; specializing our results to the Gaussian setting enables us to extend the noisy interference results to scenarios previously not considered, some of which are described below. 

%have presented (among other results) a noisy interference regime for the Gaussian many-to-one interference channels with single-antenna nodes. 
\begin{itemize}
\item Previous works \cite{Sreekanth_Veeravalli, Motahari_Khandani, Shang_Kramer_Chen} consider Gaussian interference channels where the input alphabet is continuous and there is an average power constraint on the input codewords. For these channels, the references show that for certain values of channel gains, random Gaussian codebooks and treating interference as noise is optimal when there is an average power constraint on the input codewords. In practice, however, input signals are typically restricted to fixed finite-size constellations such as PSK, QAM etc. It is not clear whether the noisy interference results results carry forward to the more practical setting of the inputs being constrained to fixed constellations. In fact, there remained open the question of whether there even exists a non-trivial set of channel gains where random coding and treating interference as noise achieves sum-capacity in this setting. In this work, we settle this open question by showing that the noisy interference regime remains valid in the Gaussian many-to-one interference channel even if the inputs are restricted to fixed constellations. In other words, if the channel gains satisfy the conditions of (\ref{eq:classicalnoisy}), then, random codebooks generated i.i.d with the appropriate distribution at the inputs, and treating interference as noise at the receiver facing interference achieves sum-capacity - even if the inputs are restricted to fixed constellations. Therefore, in the noisy interference regime, the capacity characterization problem is essentially reduced to the problem of determining the optimal \emph{single-letter} distribution on the inputs. %Unlike in point-to-point channels, in the noisy interference regime in many-to-one interference channels, it is not clear whether or not the uniform distribution on the inputs are optimal even for symmetric input constellations such as BPSK.
\item The results of \cite{Sreekanth_Veeravalli, Motahari_Khandani, Shang_Kramer_Chen} are for interference channels with a single antenna at each node - the question of the existence and characterization of noisy interference regimes for MIMO interference channels remains open. In this work, we (partially) address this open question by characterizing a noisy interference regime for the MIMO Gaussian many-to-one interference channel. Note that extensions of \cite{Sreekanth_Veeravalli,Motahari_Khandani,Shang_Kramer_Chen} have been proposed to \emph{two} user MIMO interference channels \cite{Annapureddy_Veeravalli_MIMO, Shang_Kramer_Chen_Poor_MIMO}. Our result differs from the result of \cite{Annapureddy_Veeravalli_MIMO, Shang_Kramer_Chen_Poor_MIMO} in that, we present a noisy interference regime for the $K$-user interference channel, albeit not fully connected (since we only consider the many-to-one interference channel). It must be noted that the noisy interference regime for the MIMO setting also remains valid for input signals being restricted to finite constellations.
\item Previous noisy interference results are presented for the case where the channel is constant (i.e., not fading), and when transmitters and receivers have channel state information (CSI). In this paper, we obtain a noisy interference regime for the fading Gaussian many-to-one interference channel where transmitters do not have CSI, and only the receivers have CSI. %In particular, we note that the noisy interference regime expands because of the absence of CSI at the transmitters. The expansion of the noisy interference in absence of CSI at the transmitters can be intuitively explained by two phenomena. Firstly, the lack of CSI restricts of the amount of common information that can be conveyed using the Han-Kobayashi scheme \cite{Han_Kobayashi, Etkin_Tse_Wang}, and hence motivates and expansion of the regime where the decoding procedure of treating interference as noise is optimal. Secondly, since most interference alignment schemes rely on availabilty of CSI at the transmitter, the extent of achievable explicit interference alignment is limited in this case, and hence motivates the optimality of random coding. More generally, since sophisticated interference management techniques rely on availability of CSI for their effectiveness, the lack of CSI limits the extent to which these techniques can be applied, and enhances the regime where the rudimentary interference management technique of random coding and treating interference as noise is optimal.
\item Previous results \cite{Sankar_inseparable, Cadambe_Jafar_inseparable} have shown that \emph{parallel} (i.e. multi-carrier) Gaussian interference channels (including many-to-one interference channels), unlike point-to-point, multiple access and broadcast channels, are in general \emph{inseparable}, i.e., joint coding over the multiple carriers is required to achieve sum-capacity in parallel interference channels. While parallel interference channels are in general inseparable, \emph{under certain special conditions}, they are separable, i.e., separate (independent) coding over the various carriers (and in fact, treating interference as noise) achieves sum-capacity. Such conditions have been identified for parallel single-antenna $Z$ interference channels in \cite{Sankar_etal_separability} and for the (fully-connected) $2$-user Gaussian interference channels in \cite{Shang_Kramer_Chen_Poor_parallel}. In this paper, we extend the results of \cite{Sankar_etal_separability} to Gaussian MIMO many-to-one interference channels. In particular, we show that, under the special case that the many-to-one interference channels formed over \emph{each} of the carriers forming the parallel channel satisfies our noisy interference conditions, the channel is separable from a sum-capacity perspective. For example, in the single-antenna Gaussian many-to-one interference channel, if the channel gains on each of the carriers satisfy the condition of (\ref{eq:classicalnoisy}), then separate random coding and treating interference as noise achieves sum-capacity. Therefore, in this case, with an average power constraint on the input, the sum-capacity of the parallel many-to-one interference channel is the sum of the capacities of the various individual carriers under an optimal power allocation - much like the point-to-point, MAC and BC channels. Further, this separability result is not limited to the average power constraint on the inputs, and holds even for inputs of fixed finite constellations. It must be noted that our main result automatically implies that random coding and treating interference as noise over such a channel (where each carrier satisfies our noisy interference criterion) achieves sum-capacity, because the required degradedness condition holds for the parallel channel. But our main result described above does not, however, imply their separability - the property that the optimal distribution used in random coding has the input over each carrier independent of the input of the other carriers. The separability is an additional result shown in Section \ref{sec:separable}.
\end{itemize}

\subsection*{Why is explicit interference alignment not required in noisy interference regimes?}
An important insight to emerge from this work is that in the noisy interference regime, interference is aligned implicitly via random codes. The idea of interference alignment with random codes can be understood in the following setting. Consider a receiver receiving multiple signals coded from a codebook generated in the classical random coding fashion. If the cardinalities of the codebooks corresponding to these signals lie in the achievable random coding rate region (with the corresponding input distributions) of the multiple access channel formed at the receiver, then the receiver can resolve these multiple signals with high probability. In other words, the signals are not aligned. On the other hand, if the cardinalities of the codebooks lie outside this achieved rate region of the multiple access channel formed at the receiver, then the signals align. In fact, in this scenario, the signals cannot be resolved uniquely at the receiver, \emph{because} the signals align. While alignment is not a desirable phenomenon if the receiver intends to resolve the signals as is the case in the multiple access channel, it is beneficial if the signals are interfering at the receiver as is the case in interference channels. In the noisy interference regime for the many-to-one interference channel, we show that because of a degraded nature of the channel, interference can be aligned with random codes for any distribution on the inputs. In particular, in the Gaussian channel, interference is aligned, even with random coding and with the circularly symmetric Gaussian distribution; thus, alignment is implicit in this case.
%  Thus explicit alignment with a careful choice of input distributions, or structured codewords is not required in this regime, and the distributions that maximize rates of desired users also to align interference. In other words, in this case, the needs of interference management, and maximizing the desired rates are not `in conflict'.

It must be noted that the optimality of random coding in the noisy interference regime is desirable from two perspectives. First, the generality of random coding argument enables us to present results for a fairly broad class of channels which may or may not be linear (though we later {specialize} our results to the linear Gaussian setting). Secondly, the optimality of random codes enables a single-letter characterization for the capacity, unlike in channels which need structured codes, where single-letter characterizations may not even exist \cite{Philosof_Zamir}. 

The idea of implicit interference alignment in the noisy interference regime is examined more closely in Section \ref{sec:deterministic} by specializing the noisy interference regime to a \emph{deterministic} many-to-one interference channel. In this setting, we observe that if a random code transmits at sufficiently high rates, then the interference becomes noisy and the extent of alignment via random codes is optimal. The simpler setting of the deterministic channel, apart from enabling a better understanding of the idea of alignment via random coding, allows two other interesting insights into interference alignment. Firstly, we find that random coding achieves capacity in a scenario where the multiple interferers are resolvable at the receiver facing interference (The idea of resolvable interference has been earlier used to determine the capacity of a class of symmetric deterministic interference channels in \cite{Gou_Jafar}). The resolvability of interference precludes the possibility of interference alignment which enables a characterization of its capacity region. Secondly, a combination of insights from the noisy interference regime and the resolvable interference regime enables us to provide, in Section \ref{sec:discussion1}, a (partial) answer to the question : How many bits of additional rate can interference alignment provide on the many-to-one interference channel? 

We now proceed to the next section where we formally define the discrete memoryless many-to-one interference channel - the basic setting of all the results of this paper.

\section{System Model : A Class of Discrete Memoryless Many-to-One Interference Channels}
\label{sec:sysmod}

\begin{figure*}%[!tbp]
\begin{center}
\epsfig{file=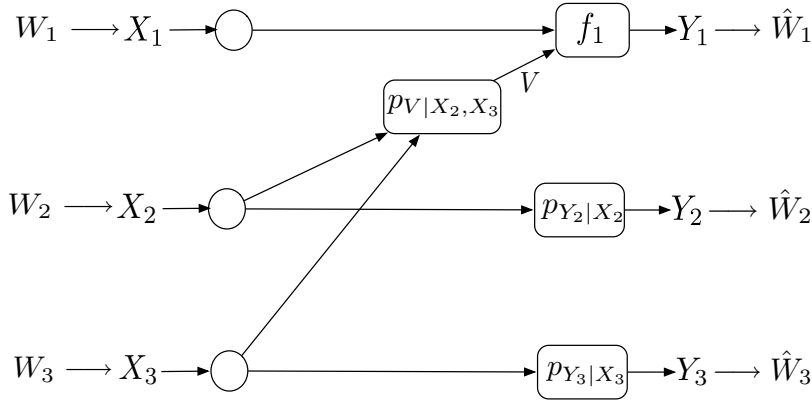, height=2.1 in, width=4.2in}
\caption{The $3$-user Many-to-one Interference Channel}
\label{fig:many-one}
\end{center}
\end{figure*}

The $K$ user discrete memoryless many-to-one interference channel (Figure \ref{fig:many-one}) is defined by a set of $K$ inputs $X_i \in \mathcal {X}_i$, and a set of $K$ outputs $Y_i \in \mathcal{Y}_i$ for $i=1,2,\ldots,K$. In the class of many-to-one interference channels considered, the outputs $Y_{i}, i=2,3,\ldots, K$ are generated using the distributions $p_{Y_{i}|X_i}$. The output $Y_1$ is generated as 
$$ Y_1 = f_1(X_1,V),$$
where $V \in \mathcal{V}$ is generated using $p_{V|X_2,X_3,\ldots,X_K}$. We assume that $V$ is invertible from $(Y_1,X_1),$ i.e., there exists a function $f_1^{-1}$ such that 
\begin{equation}V=f_1^{-1}(Y_1,X_1). \label{eqn:invertible}\end{equation}
There are $K$ independent messages, with message $W_{i} \in \mathcal{W}_i$ generated at source $i\in \{1,2,\ldots, K\}$, with each message being uniformly distributed over the corresponding message set.
 A \emph{code} of length $T$ symbols consists of encoding functions (or equivalently, codebooks) $\phi_i:\mathcal{W}_i \rightarrow \mathcal{X}_i^T$ and decoding functions $\psi_i:\mathcal{Y}_i^T \rightarrow \mathcal{W}_i$ for all $i \in \{1,2,\ldots K\}$. It is assumed that all the codebooks, i.e., all the mappings $\phi_i,i=1,2,\ldots,K,$ are known to all the decoders. We restrict our study to channels and constraints on codewords which ensure that one of the following two sets of quantities exist
\begin{itemize} 
\item $H(Y_i^T)$ and $H(Y_i^T|X_i^T)$ exist for $i=1,2,\ldots,K$. Note that using $i=1$, and (\ref{eqn:invertible}), this automatically implies that $H(V^T)$ exists. Also note that the this condition captures all channels where the alphabets $\mathcal{X}_i,\mathcal{Y}_i$ are finite.
\item $h(Y_i^T)$ and $h(Y_i^T|X_i^T)$ exist for $i=1,2,\ldots,K$. Note that using $i=1$, and (\ref{eqn:invertible}), this automatically implies that $h(V^T)$ exists.
\end{itemize} 

The average probability of error of the code $P_{e}^{(T)}$ is defined to be the probability that the set of decoded messages is not identical to the set of encoded messages, i.e., $$ P_{e}^{(T)}\define\Pr\left(\left\{ \exists i \in \{1,2,\ldots,K\}| \psi_i(Y_i^T) \neq W_i\right\}\right).$$ The rate of the code is the tuple $\vec{R}=(R_1,R_2,\ldots,R_K)$, where $R_{i}=\frac{\log|\mathcal{W}_i|}{T}$, with $|\mathcal{W}_i|$ denoting the cardinality of the message set $\mathcal{W}_i$.  A rate-tuple $\vec{R}$ is said to be \emph{achievable} if there exists a sequence of codes, all of rate $\vec{R}$, such that average probability of error vanishes asymptotically, as the sequence index increases. Let $\mathcal{C}$ be the closure of the set of all achievable rate tuples. The sum-capacity $C_{\Sigma}$ of the interference channel is defined as 
$$ C_{\Sigma} \define \max_{(R_1,R_2,\ldots,R_K) \in \mathcal{C}} \sum_{i=1}^{K} R_i.$$

\subsubsection*{Notation}
We use the notation $A^T$ to denote $(A(1),A(2),\ldots,A(T))\in\mathcal{A}^T$ for any random variable $A$. The calligraphic notation is used to indicate sets. The notation $\mathcal{N}(\mu, \Lambda)$ is used to indicate a circularly symmetric complex Gaussian random vector with mean $\mu$ and covariance matrix $\Lambda$. $I_{N}$ is used to denote the $N \times N$ identity matrix. The following quantities are also used in the paper. 
\begin{eqnarray*}
\Kappa_1&=&\{2,3,\ldots,K\}\\
\Kappa&=& \Kappa_1 \cup \{1\}\\
%U_{\mathcal{A}} &=& \{U_i, i \in \mathcal{A}\}, \mathcal{A} \subseteq \Kappa_1\\
Y_{\mathcal{A}} &=& \{Y_i, i \in \mathcal{A}\}, \mathcal{A} \subseteq \Kappa.
\end{eqnarray*}

Before we proceed, the following points must be noted.
\begin{itemize}
\item The channel studied here is a natural adaptation, to the many-to-one interference channel setting, of the $2$ user interference channel studied in \cite{Telatar_Tse}.
\item In the special case where all the alphabets $\mathcal{X}_i,\mathcal{Y}_i, \mathcal{V}$ are finite for all $i$, and all the distribution functions are deterministic (i.e. $H(Y_i|X_i)=H(V|X_2,X_3,\ldots,X_K)=0$ for all $i=2,\ldots,K$ irrespective of the input distribution), the channel is an adaptation, to the many-to-one interference channel setting, of the class $2$ user deterministic interference channels studied by El Gamal and Costa in \cite{Gamal_Costa}. 
\item As we describe next, the MIMO Gaussian many-to-one interference channel is a special case of the class of channels described above.
\end{itemize}
\subsection*{The MIMO Gaussian Many-to-One Interference Channel}
Consider the MIMO Gaussian interference channel with $M_i$ antennas at Transmitter $i$ and $N_i$ antennas at Receiver $i$ so that $\mathcal{X}_i \subseteq \mathbb{C}^{M_i}$, $\mathcal{Y}_i = \mathbb{C}^{N_i}$ 
\begin{eqnarray}
{Y}_i(\tau)&=&{H}_{ii} {X}_i(\tau) + {Z}_i(\tau), i \in \Kappa_1 \label{eq:Gaussian1}\\ 
{Y}_1(\tau)&=& \sum_{i \in \Kappa} {H}_{1i} {X}_i(\tau) + {Z}_1(\tau), i \in \Kappa_1 \label{eq:Gaussian2}
\end{eqnarray}
where, corresponding to the $\tau$th symbol, ${X}_i$ is the $M_i \times 1$ vector representing the input at Transmitter $i$, ${Y}_i, {Z}_i$ are the $N_i \times 1$ vectors representing the output and the additive white Gaussian noise vectors at Receiver $i$. We assume that all the noise vectors are circularly symmetric with zero mean and a covariance matrix of identity. This channel can be reduced to the channel defined previously (Figure \ref{fig:many-one}) if we set 
\begin{eqnarray*} 
V&=&\sum_{i=2}^{K}H_{1i} X_i+Z_1,\\
f(X_1,V)&=&H_{11} X_1+V,
\end{eqnarray*}
where $V$ is a $N_1 \times 1$ vector. Note that since the constraints on the inputs $X_i$ are fairly general, we capture most scenarios of interest such as inputs from a finite fixed-size constellation (Example : PSK, QAM) and inputs with an average and power constraints on the codeword. Further, for the special case where $\mathcal{X}_i=\mathbb{C}^{M_i}$ and an average power constraint is imposed on the input codewords, we will give an explicit expression of the capacity of the channel in the noisy interference regime (to be defined later) in terms of the powers $P_i$, where 
$$ E\left[\frac{1}{T}\sum_{\tau=1}^{T} || X_i(\tau)||^2\right] \leq P_i,$$
and $T$ denotes the length of the codeword.

\section{A Noisy Interference Regime for Many-to-One Interference channels}
\label{sec:result}

Before we proceed to the main result, we introduce a lemma which is useful in the proofs.
\begin{lemma}
\label{lemma:extremal}
{\it
Consider random sequences $A^{T},B^{T},C^{T}$ such that $A \in \mathcal{A}, B \in \mathcal{B}, C \in \mathcal{C}$, and $B^T,C^T$ are generated as $p(B^T,C^T,A^T)=p(A^T)\displaystyle\prod_{\tau=1}^{T}p(B(\tau)|A(\tau))p(C(\tau)|B(\tau))$, where $p(B(\tau)|A(\tau))=p_{B|A}, p(C(\tau)|B(\tau))= p_{C|B}$, for all $\tau=1,2,\ldots,T$. Note that the sequence $A^{T}$ does \emph{not} have to be generated in an i.i.d fashion. The sequences $B^{T},C^{T}$ can be interpreted to be outputs of a physically degraded discrete memoryless broadcast channel whose input sequence is $A^T$. Also note that $A^{T} \rightarrow B^{T} \rightarrow C^{T}$. Now, if $H(B^{T}),H(C^{T})$ exist, then 
$$H(B^{T})-H(C^{T}) \leq \sum_{\tau=1}^{T} \left(H(B(\tau))-H(C(\tau))\right)$$
If $h(B^{T}),h(C^{T})$ exist, then 
$$h(B^{T})-h(C^{T}) \leq \sum_{\tau=1}^{T} \left(h(B(\tau))-h(C(\tau))\right)$$
Further, suppose $\mathcal{A} = \mathbb{C}^{M}$,  $p_{B|A} \thicksim \mathcal{N}(0,{\Lambda}_1)$ and $p_{C|B}\thicksim \mathcal{N}(0,{\Lambda}_2)$, or equivalently, let there be variables $Z_i, i=1,2$ generated i.i.d according to $Z_i \thicksim \mathcal{N}(0, \Lambda_i), i \in \{1,2\}$, and $\forall \tau \in \{1,2,\ldots,T\}$
\begin{eqnarray*}
B(\tau)&=&A(\tau)+Z_1(\tau)\\
C(\tau)&=&B(\tau)+Z_2(\tau)
\end{eqnarray*}
 Also, let consider a covariance matrix constraint on the sequence $A^T$, i.e.,
$$ E\left[\frac{1}{T}\sum_{\tau=1}^{T} A(\tau)A(\tau)^{\dagger}\right] = \Gamma$$ for some covariance matrix $\Gamma$, then i.i.d circularly symmetric Gaussian distribution on $A^{T}$ maximizes the quantity $h(B^{T})-h(C^{T})$.}
%$$\max_{p_{A^{N}}} h(B^{N})-h(C^{N})$$ 
%where $p_{A^{N}}$ represents the probability distribution of the \emph{sequence} $A^{N}$.
\end{lemma}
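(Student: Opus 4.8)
The plan is to prove the three assertions in order, with the two entropy-difference inequalities carrying the combinatorial weight and the Gaussian claim reduced to a single-letter extremal step.

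\textbf{The two entropy-difference inequalities.} For the discrete claim I would expand both entropies by the chain rule,
$$H(B^{T})-H(C^{T})=\sum_{\tau=1}^{T}\left[H(B(\tau)\mid B^{\tau-1})-H(C(\tau)\mid C^{\tau-1})\right],$$
and rewrite each summand as $[H(B(\tau))-H(C(\tau))]-[I(B(\tau);B^{\tau-1})-I(C(\tau);C^{\tau-1})]$. It then suffices to show $I(B(\tau);B^{\tau-1})\ge I(C(\tau);C^{\tau-1})$ for every $\tau$. The memoryless degraded structure yields the Markov chain $C(\tau)-B(\tau)-B^{\tau-1}-C^{\tau-1}$: given $B(\tau)$ the symbol $C(\tau)$ is independent of the past since $p_{C\mid B}$ is memoryless, and given $B^{\tau-1}$ the past output $C^{\tau-1}$ is independent of $B(\tau)$. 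Two applications of the data-processing inequality along this chain give $I(C(\tau);C^{\tau-1})\le I(C(\tau);B^{\tau-1})\le I(B(\tau);B^{\tau-1})$, and summing over $\tau$ proves the claim. The identical argument, with differential entropies in place of discrete ones (mutual information and the data-processing inequality being insensitive to the distinction), establishes the second inequality.

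\textbf{Reduction of the Gaussian claim.} For the final assertion I would first invoke the differential-entropy inequality just proved to obtain $h(B^{T})-h(C^{T})\le\sum_{\tau}[h(B(\tau))-h(C(\tau))]$, and then argue that this bound is both met with equality and term-by-term maximized by the i.i.d.\ Gaussian input. Equality is immediate: if $A^{T}$ is i.i.d.\ Gaussian with per-letter covariance $\Gamma$, then $B^{T}$ and $C^{T}$ are i.i.d., so the two sides coincide. The substantive step is the single-letter statement that, over all $A$ with a fixed covariance $\Sigma$, the Gaussian $A$ maximizes $h(A+Z_{1})-h(A+Z_{1}+Z_{2})$, with optimal value $g(\Sigma)=\log\frac{|\Sigma+\Lambda_{1}|}{|\Sigma+\Lambda_{1}+\Lambda_{2}|}$ up to a constant. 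Writing $B=A+Z_{1}$, this reduces to showing that Gaussian $B$ maximizes $h(B)-h(B+Z_{2})$ at fixed $\mathrm{cov}(B)$, since a Gaussian $A$ realizes a Gaussian $B$ of the required covariance.

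\textbf{The single-letter extremal step and concavity.} I would prove the latter extremal inequality by the de Bruijn identity, writing $h(B+Z_{2})-h(B)=\tfrac12\int_{0}^{1}\mathrm{tr}\!\left(\Lambda_{2}\,J(B+\sqrt{t}\,\Lambda_{2}^{1/2}W)\right)dt$ and bounding the Fisher-information matrix from below by the inverse covariance through the matrix Cram\'er--Rao inequality, with equality exactly in the Gaussian case. To pass from the per-letter covariances $\Sigma_{\tau}$, whose average is $\Gamma$, to a single $\Gamma$, I would use concavity of $g$ on the positive-semidefinite cone, which amounts to convexity of $\Sigma\mapsto\log|\Sigma+\Lambda_{1}+\Lambda_{2}|-\log|\Sigma+\Lambda_{1}|$; this follows from a second-derivative computation resting on $\mathrm{tr}(P^{-1}HP^{-1}H)\ge\mathrm{tr}(Q^{-1}HQ^{-1}H)$ for $0\prec P\preceq Q$. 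Then $\sum_{\tau}g(\Sigma_{\tau})\le T\,g(\Gamma)$ by Jensen, and chaining this with the single-letterization and the equality case closes the argument.

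\textbf{Main obstacle.} I expect the single-letter extremal inequality to be the crux: the chain-rule and data-processing steps are routine, but Gaussian optimality at fixed covariance genuinely requires an entropy-power or Fisher-information argument (equivalently, it is the degraded, unit-weight case of the known vector extremal inequality), and some care is needed to verify that both the Cram\'er--Rao bound and the concavity of $g$ hold in the matrix (MIMO) setting and not merely in the scalar one.
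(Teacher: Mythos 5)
Your proof is correct, but it takes a genuinely different route from the paper's in both halves. For the single-letterization, the paper starts from the Markov chain $A^T \rightarrow B^T \rightarrow C^T$ and the identity $I(A^T;B^T)=I(A^T;C^T)+I(A^T;B^T|C^T)$, which gives $h(B^T)-h(C^T)=h(B^T|C^T)+h(B^T|A^T)-h(C^T|A^T)-h(B^T|C^T,A^T)$; it then single-letterizes each of the four terms (conditioning reduces entropy for the first, memorylessness for the rest) and recombines the per-letter terms back into $h(B(\tau))-h(C(\tau))$ by an entropy calculation. You instead expand $H(B^T)$ and $H(C^T)$ directly by the chain rule and reduce everything to the comparison $I(C(\tau);C^{\tau-1}) \le I(B(\tau);B^{\tau-1})$, obtained from two data-processing steps along $C(\tau)-B(\tau)-B^{\tau-1}-C^{\tau-1}$, which is indeed a valid Markov chain here (given $B^T$, the $C$'s are conditionally independent, each depending only on its own $B$). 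Your decomposition is arguably more transparent: it isolates the reason the inequality holds, namely that the degraded output retains less memory of the past, where the paper's recombination is a somewhat opaque algebraic step. For the Gaussian assertion, the paper notes that in its single-letter bound only $h(B(\tau)|C(\tau))$ depends on the input law and quotes the standard fact that a Gaussian maximizes conditional entropy under a covariance constraint, ``combined with the convexity of entropy''; since $h(B|C)=h(B)-h(C)+h(Z_2)$, that is exactly the extremal claim you prove, just packaged differently. Your treatment via de Bruijn's identity plus the matrix Cram\'er--Rao bound, together with the explicit verification that $\Sigma \mapsto \log\det(\Sigma+\Lambda_1)-\log\det(\Sigma+\Lambda_1+\Lambda_2)$ is concave (via $\mathrm{tr}(P^{-1}HP^{-1}H)\ge \mathrm{tr}(Q^{-1}HQ^{-1}H)$ for $0 \prec P \preceq Q$) and a Jensen step, is heavier machinery, but it supplies precisely the two steps the paper leaves implicit: the single-letter extremal inequality itself, and the concavity argument needed because the covariance constraint is only an average over $\tau$. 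In short, your proof buys self-contained rigor at the cost of length; the paper's buys brevity by leaning on a standard conditional-entropy maximization fact and a terse ``convexity'' remark.
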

The proof of the lemma is placed in Appendix \ref{app:extremal}. The reader may note that a special case of the above lemma, where $\mathcal{A}=\mathcal{B}=\mathcal{C}=\mathbb{C}$ and $p_{B|A}$, $p_{C|B}$ are both Gaussian, is used in showing previous noisy interference results for SISO interference channels \cite{Sreekanth_Veeravalli, Motahari_Khandani, Shang_Kramer_Chen}.
We now present a noisy interference regime for the discrete memoryless many-to-one interference channel considered in this paper.
\begin{theorem}
\label{thm:noisy}
{\it 
In the many-to-one interference channel defined previously, if $p_{V|X_{\Kappa_1}}$ is a degraded form of $p_{Y_{\Kappa_1}|X_{\Kappa_1}}=\prod_{i\in\Kappa_1} p_{Y_i|X_i}$ then, its sum-capacity can be achieved with random coding and treating interference as noise, and it can be expressed as 
\begin{equation} C_{\Sigma} = \max\sum_{i\in \Kappa} I(X_i;Y_i),\label{eq:sumcapacity}\end{equation}
where the maximization is carried over all probability distributions on the input which factorize as $\prod_{i=1}^{K} p_{X_i}$. }
\end{theorem}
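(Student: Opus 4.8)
The plan is to prove the two inclusions separately: achievability of the right-hand side of (\ref{eq:sumcapacity}) by random coding, and a matching converse, with the converse carrying essentially all the work. Achievability is routine. Fixing any product distribution $\prod_{i\in\Kappa}p_{X_i}$, each transmitter draws an i.i.d.\ random codebook from its marginal. Each receiver $i\in\Kappa_1$ faces the point-to-point channel $p_{Y_i|X_i}$ and decodes $W_i$ by joint typicality, achieving any rate below $I(X_i;Y_i)$. Receiver $1$ treats the aggregate interference-plus-noise $V$ as noise; since $V$ is independent of $X_1$ and $Y_1=f_1(X_1,V)$, it sees the induced point-to-point channel from $X_1$ to $Y_1$ and achieves any rate below $I(X_1;Y_1)$. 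As the codebooks are mutually independent, a standard simultaneous error analysis shows the tuple $(I(X_1;Y_1),\ldots,I(X_K;Y_K))$ is achievable; summing and optimizing over the product distribution gives the claim.

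For the converse I would begin from Fano's inequality together with the data-processing chains $W_i\rightarrow X_i^T\rightarrow Y_i^T$ to obtain $T\sum_{i\in\Kappa}R_i\le\sum_{i\in\Kappa}I(X_i^T;Y_i^T)+TK\epsilon_T$ with $\epsilon_T\rightarrow 0$, and then single-letterize the right-hand side. Two structural facts do the work. First, the invertibility (\ref{eqn:invertible}) gives a per-symbol bijection between $Y_1$ and $V$ given $X_1$, and since $V^T$ is a function of $X_{\Kappa_1}^T$ (and noise) it is independent of $X_1^T$; hence $h(Y_1^T|X_1^T)=h(V^T)$ and $I(X_1^T;Y_1^T)=h(Y_1^T)-h(V^T)$. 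Second, because the messages, and therefore the codewords $X_i^T$, are independent across users, the outputs $\{Y_i^T\}_{i\in\Kappa_1}$ are mutually independent, so $\sum_{i\in\Kappa_1}h(Y_i^T)=h(Y_{\Kappa_1}^T)$. Combining these yields the decomposition
\begin{equation*}
\sum_{i\in\Kappa}I(X_i^T;Y_i^T)=h(Y_1^T)+\big[h(Y_{\Kappa_1}^T)-h(V^T)\big]-\sum_{i\in\Kappa_1}h(Y_i^T|X_i^T).
\end{equation*}
(The finite-alphabet case is identical with $H$ in place of $h$.)

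The crux is the bracketed term, which carries the memory introduced by coding. Here I would invoke Lemma \ref{lemma:extremal} with $A=X_{\Kappa_1}$, $B=Y_{\Kappa_1}$, $C=V$: the hypothesis that $p_{V|X_{\Kappa_1}}$ is degraded with respect to $p_{Y_{\Kappa_1}|X_{\Kappa_1}}$ supplies a memoryless channel $p_{V|Y_{\Kappa_1}}$ realizing a physically degraded chain $X_{\Kappa_1}^T\rightarrow Y_{\Kappa_1}^T\rightarrow \tilde V^T$ whose output has the same per-symbol marginals as $V^T$; since entropy depends only on the marginal, we may replace $V$ by $\tilde V$ at no cost, and the lemma gives $h(Y_{\Kappa_1}^T)-h(V^T)\le\sum_{\tau}\big[h(Y_{\Kappa_1}(\tau))-h(V(\tau))\big]$. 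Using in addition $h(Y_1^T)\le\sum_{\tau}h(Y_1(\tau))$, the memorylessness $h(Y_i^T|X_i^T)=\sum_{\tau}h(Y_i(\tau)|X_i(\tau))$, and the per-symbol identities $h(Y_1(\tau))-h(V(\tau))=I(X_1(\tau);Y_1(\tau))$ and $h(Y_{\Kappa_1}(\tau))=\sum_{i\in\Kappa_1}h(Y_i(\tau))$, the entire bound collapses to $\sum_{\tau}\sum_{i\in\Kappa}I(X_i(\tau);Y_i(\tau))$. At each $\tau$ the inputs are independent across users, so the time-$\tau$ term is at most $\max_{\prod p_{X_i}}\sum_{i\in\Kappa}I(X_i;Y_i)$; dividing by $T$ and letting $T\rightarrow\infty$ completes the converse.

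The main obstacle I anticipate is precisely that bracketed entropy difference: one must recognize that the $-h(V^T)$ arising from receiver $1$'s bound should be paired with the $+h(Y_{\Kappa_1}^T)$ assembled from the other receivers' bounds so that the degraded-broadcast structure required by Lemma \ref{lemma:extremal} actually appears, and one must bridge the stochastic-versus-physical degradedness gap by passing to the marginal-preserving surrogate $\tilde V^T$. Everything else is bookkeeping built from independence of the codewords and the invertibility of $V$. For the Gaussian specialization with per-user covariance or power constraints, the second (Gaussian extremal) part of Lemma \ref{lemma:extremal} additionally shows that circularly symmetric Gaussian inputs maximize $\sum_{i\in\Kappa}I(X_i;Y_i)$, yielding the explicit capacity expression.
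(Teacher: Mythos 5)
Your proposal is correct and follows essentially the same route as the paper's own proof in Appendix \ref{app:noisy}: Fano's inequality plus the invertibility (\ref{eqn:invertible}) and the independence of the $Y_i^T$ across users yield exactly the decomposition $h(Y_1^T)+\bigl[h(Y_{\Kappa_1}^T)-h(V^T)\bigr]-\sum_{i\in\Kappa_1}h(Y_i^T|X_i^T)$, after which Lemma \ref{lemma:extremal} applied to the chain $X_{\Kappa_1}^T\rightarrow Y_{\Kappa_1}^T\rightarrow V^T$ --- with the same replacement of stochastic by physical degradedness, justified because capacity depends only on the conditional marginals --- single-letterizes the bracketed term. The final collapse to $\sum_{\tau}\sum_{i\in\Kappa}I(X_i(\tau);Y_i(\tau))$ and the maximization over product input distributions match the paper's argument step for step.
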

The theorem is proved in Appendix \ref{app:noisy}. We now apply the above result to the Gaussian setting below.
\begin{corollary}
\label{cor:Gaussian}
{\it
Consider the Gaussian many-to-one interference channel as defined by (\ref{eq:Gaussian1})-(\ref{eq:Gaussian2}).
If, for $i \in \Kappa_1$, there exist covariance matrices $\Lambda_i$ so that \begin{itemize}
\item $\sum_{i\in \Kappa_1} \Lambda_i \prec {I}_{N_1}$ and
\item  $U_{i} \define H_{1i}X_i + \tilde{Z}_i$ is a stochastically degraded version of $Y_{i}$, where $\tilde{Z}_{i}\thicksim \mathcal{N}(0, \Lambda_i)$, 
\end{itemize}
then, a single-letter input distribution on inputs with Receiver $1$ treating all interference as noise is sum-capacity optimal. Further, if $\mathcal{X}_i = \mathbb{C}^{M_i}$ with a power constraint $P_i$ on the input codeword $X_i^T$, then the optimal input distribution on the input is Gaussian, i.e., the capacity of the channel can be expressed as 
$$ C_{\Sigma}=\max_{\Gamma_i, \textrm{tr}(\Gamma_i) \leq P_i, i \in \Kappa} \sum_{i=1}^{K} \log\left(\frac{\det\left(I_{N_i}+\displaystyle\sum_{k\in \Kappa}H_{ik}\Gamma_k H_{ik}^{\dagger}\right)}{\det\left( I_{N_i}+\displaystyle\sum_{k \in \Kappa-\{i\}}H_{ik}\Gamma_k H_{ik}^{\dagger}\right)}\right)$$ with $H_{ij}=0$ if $i \notin \{1,j\}$. }
\end{corollary}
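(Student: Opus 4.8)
The plan is to derive Corollary~\ref{cor:Gaussian} from Theorem~\ref{thm:noisy} in two stages: first verify that the stated conditions imply the degradedness hypothesis of the theorem, and then carry out the resulting single-letter optimization under the Gaussian power constraints.

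\emph{Stage 1 (reduction to Theorem~\ref{thm:noisy}).} Here $V = \sum_{i\in\Kappa_1} H_{1i}X_i + Z_1$ with $Z_1\sim\mathcal{N}(0,I_{N_1})$, and the direct outputs $Y_i=H_{ii}X_i+Z_i$, $i\in\Kappa_1$, are mutually independent given the (independent) inputs. I would construct an explicit degrading channel from $Y_{\Kappa_1}$ to $V$: by hypothesis each $U_i=H_{1i}X_i+\tilde Z_i$ with $\tilde Z_i\sim\mathcal{N}(0,\Lambda_i)$ is a stochastically degraded version of $Y_i$, so feeding $Y_{\Kappa_1}$ through the product (independently randomized) of these per-user degrading channels produces $(U_i)_{i\in\Kappa_1}$ with the correct joint law given $X_{\Kappa_1}$. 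Summing, $\sum_{i\in\Kappa_1}U_i = \sum_{i\in\Kappa_1}H_{1i}X_i + \tilde Z$ with $\tilde Z\sim\mathcal{N}(0,\sum_{i\in\Kappa_1}\Lambda_i)$; since $\sum_{i\in\Kappa_1}\Lambda_i\prec I_{N_1}$, I may add an independent $Z'\sim\mathcal{N}(0,I_{N_1}-\sum_{i\in\Kappa_1}\Lambda_i)$, and $\sum_{i\in\Kappa_1}U_i+Z'$ then has exactly the conditional law $p_{V|X_{\Kappa_1}}$. This exhibits $X_{\Kappa_1}\to Y_{\Kappa_1}\to V$, so $p_{V|X_{\Kappa_1}}$ is a degraded form of $\prod_{i\in\Kappa_1}p_{Y_i|X_i}$ and Theorem~\ref{thm:noisy} applies, giving the first claim and the formula $C_\Sigma=\max_{\prod_i p_{X_i}}\sum_{i\in\Kappa}I(X_i;Y_i)$.

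\emph{Stage 2 (Gaussian optimality and the explicit formula).} Writing $I(X_1;Y_1)=h(Y_1)-h(V)$ and $I(X_i;Y_i)=h(Y_i)-h(Z_i)$ for $i\in\Kappa_1$, the objective is $h(Y_1)+\big(\sum_{i\in\Kappa_1}h(Y_i)-h(V)\big)$ up to the constants $h(Z_i)$. I would fix the input covariances $\Gamma_i$ (with $\mathrm{tr}(\Gamma_i)\le P_i$), show that jointly Gaussian inputs maximize the objective for each such choice, and then perform the outer maximization over $\{\Gamma_i\}$; direct evaluation for Gaussian inputs ($\det\mathrm{Cov}(Y_1)=\det(I_{N_1}+\sum_k H_{1k}\Gamma_k H_{1k}^\dagger)$, $\det\mathrm{Cov}(V)=\det(I_{N_1}+\sum_{k\ne1}H_{1k}\Gamma_k H_{1k}^\dagger)$, and $\det\mathrm{Cov}(Y_i)=\det(I_{N_i}+H_{ii}\Gamma_i H_{ii}^\dagger)$) then yields the stated $\log\det$ expression. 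The term $h(Y_1)$ is handled immediately by the Gaussian maximum-entropy bound for its (fixed) covariance. The essential point is the bracketed term: since $Y_{\Kappa_1}$ are independent, $\sum_{i\in\Kappa_1}h(Y_i)=h(Y_{\Kappa_1})$, and Stage~1 supplies the degraded chain $X_{\Kappa_1}\to Y_{\Kappa_1}\to V$, so $h(Y_{\Kappa_1})-h(V)$ is exactly a difference of entropies across a Gaussian degraded broadcast channel, which the extremal inequality of Lemma~\ref{lemma:extremal} shows is maximized, for the fixed covariance, by a Gaussian source.

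The main obstacle is this last step. Unlike $h(Y_1)$, the term $-h(V)$ pulls against the maximum-entropy principle, since making the interferers Gaussian simultaneously enlarges the desired entropies $h(Y_i)$ and the undesired $h(V)$; the two competing effects must be balanced, which is precisely the function of the extremal inequality. A technical wrinkle is that Lemma~\ref{lemma:extremal} is stated for the purely additive chain $B=A+Z_1$, $C=B+Z_2$, whereas forming $Y_{\Kappa_1}$ and $V$ involves the channel matrices and the dimension-reducing sum $\sum_i U_i$. I would therefore apply the lemma in its vector form with the linear maps absorbed into the source-covariance constraint (equivalently, invoke the vector Gaussian extremal inequality of which Lemma~\ref{lemma:extremal} is the additive special case), taking care that the constraint on $\mathrm{Cov}(X_{\Kappa_1})$ transforms correctly and that rank-deficient degrading channels are accommodated by a standard limiting argument.
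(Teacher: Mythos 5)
Your proposal follows essentially the paper's own route: your Stage 1 is exactly the paper's reduction (write $V=\sum_{i\in\Kappa_1}U_i+Z'$ with $Z'\thicksim\mathcal{N}\left(0,I_{N_1}-\sum_{i\in\Kappa_1}\Lambda_i\right)$ independent, conclude $X_{\Kappa_1}\rightarrow Y_{\Kappa_1}\rightarrow V$, invoke Theorem \ref{thm:noisy}), and your Stage 2 uses the same three ingredients as the paper's proof in Appendix \ref{app:Gaussian}: the Gaussian maximum-entropy bound for $h(Y_1)$, the extremal inequality of Lemma \ref{lemma:extremal} for the competing difference $h(Y_{\Kappa_1})-h(V)$, and the fact that the terms $h(Y_i|X_i)$ are fixed noise entropies, followed by the $\log\det$ evaluation.

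The one place where you diverge, and where your version has a gap that the paper's does not, is that you run the Gaussian-optimality argument on the \emph{single-letter} formula of Theorem \ref{thm:noisy}, implicitly taking its maximization domain to be product distributions with $E\left[\|X_i\|^2\right]\leq P_i$. Under the per-codeword average power constraint this is not licensed by the theorem as stated: its converse ends with a bound by $T\max_{\tau}\sum_{i\in\Kappa} I(X_i(\tau);Y_i(\tau))$, and the input distribution at the maximizing symbol $\tau$ need not satisfy the per-symbol power constraint (the objective is not concave in the input law, since $-h(V)$ is convex in it, so a Jensen step over $\tau$ is not automatic). The paper therefore does not optimize the single-letter expression; it re-derives the multi-letter bound (\ref{eq:fromextremal}) for an arbitrary code whose empirical block covariance is $E\left[\frac{1}{T}\sum_{\tau}X_i(\tau)X_i(\tau)^{\dagger}\right]=\Gamma_i$, and applies the Gaussian part of Lemma \ref{lemma:extremal} --- which is stated precisely for non-i.i.d.\ sequences under a block covariance constraint --- together with concavity of entropy, at the block level, before the outer maximization over $\textrm{tr}(\Gamma_i)\leq P_i$. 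Your plan can be repaired either this way, or by supplementing your per-covariance Gaussian step with concavity of the resulting $\log\det$ objective in $(\Gamma_1,\ldots,\Gamma_K)$ plus a time-sharing argument. Separately, you deserve credit for flagging that the Gaussian statement of Lemma \ref{lemma:extremal} covers only same-dimension additive chains, while the degrading maps here involve the matrices $H_{1i},H_{ii}$ and the dimension-reducing sum $\sum_{i}U_i$: the paper applies the lemma without comment on this point, and the vector/linear-map extension you describe is indeed what is needed to make that step airtight.
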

The proof is almost identical to Theorem \ref{thm:noisy}, on noting that $V$ is a degraded version of $U_{\Kappa_1}=(U_2,U_3,\ldots,U_K)$, which is in turn, a degraded version of $Y_{\Kappa_1}$, effectively implying that $V$ is a degraded version of $Y_{\Kappa_1}$. For completeness, we provide the proof in Appendix \ref{app:Gaussian}.
\subsubsection*{Remark}
In this case, if the inputs come from a finite constellation such as PSK/QAM, then the capacity characterization essentially involves determination of the optimal input distribution which maximizes (\ref{eq:sumcapacity}). Unlike the point-to-point channel, it is not clear even for symmetric constellations such as BPSK, whether the optimal distribution is uniform on the inputs.
\subsubsection*{Remark}
The noisy interference condition of Theorem \ref{thm:noisy} and the corresponding sum-capacity characterization remains unchanged, even if Transmitter $i \in \Kappa_1$ each had an independent message for Receiver $1$, along with the usual message for Receiver $i$, so that there are $2K-1$ messages in the system - i.e., even if each link of the many-to-one channel carried a message. In this channel, with the channel satisfying the conditions of the theorem, all the messages to Receiver $1$ from Transmitter $i \neq 1$ will be set to null so that the channel operates as an interference channel, for sum-capacity. The proof is almost identical as for the interference channel, with minor adaptations which are demonstrated in \cite{Huang_Cadambe_Jafar}; the reference showed that the noisy interference regime derived for the two-user interference channel in \cite{Sreekanth_Veeravalli,Motahari_Khandani,Shang_Kramer_Chen} remains unchanged even if each transmitter had a message to all receivers to form the two-user $X$ channel.

\subsection{Examples}

\subsubsection*{Example 1 - SIMO Gaussian Many-to-one Interference Channel}
The above theorem generalizes the noisy interference regime for many-to-one interference channels shown in \cite{Sreekanth_Veeravalli,Motahari_Khandani, Shang_Kramer_Chen} . To see this, consider the SIMO many-to-one interference channel where all the inputs are one-dimensional scalars, whereas all the outputs are vectors. In this case, note that the channel from Transmitter $i$ to Receiver $j$ can be represented by the vector $H_{ji}$. Without loss of generality, let us assume that $||H_{ii}||^2=1$. In this case, it can be verified that, if 
$$ \sum_{i\in \Kappa_1} ||H_{1i}||^2 \leq 1,$$
then treating interference as noise is optimal. This can be seen with the auxiliary variables $\tilde{Z}_i \thicksim \mathcal{N}(0,||H_{1i}||^2)$ and, as mentioned in the above corollary, $U_{i}=H_{1i} X_i+\tilde{Z}_i$. For the SISO case, the above condition boils down to the conditions specified in \cite{Sreekanth_Veeravalli,Motahari_Khandani, Shang_Kramer_Chen}.

\subsubsection*{Example 2 - MIMO Gaussian Many-to-one Interference Channel}
Consider a MIMO interference channel where $M_i=N_i=M, \forall i \in \Kappa_1$ and $N_1$=1. Since a MIMO $M \times M$ channel can be decomposed into $M$ parallel links using singular-value decomposition, we can assume that the channel matrices $H_{ii}, i \in \Kappa_1$ are diagonal without loss of generality. Let $H_{ii}^{(k)}$ denote the $k$th diagonal entry of $H_{ii}$. $H_{1i}, i \in \Kappa_1$ is a $M \times 1$ vector, whose $k$th entry, we denote by $H_{1i}^{(k)}$. Now, if 
$$ \sum_{i \in \Kappa_1} \sum_{k=1}^{M} \frac{|H_{1i}^{(k)}|^{2}}{{|H_{ii}^{(k)}|^{2}}} \leq 1, $$
then, treating interference as noise is optimal. This can be noted by setting $\tilde{Z}_{i}\thicksim \mathcal{N}\left(0,\sum_{k=1}^{M} \frac{|H_{1i}^{(k)}|^2}{|H_{ii}^{(k)}|^2}\right)$ and $U_{i}=H_{1i}X_i + \tilde{Z}_i$, for $i \in \Kappa_1$. 

\subsubsection*{Example 3 - Fading Gaussian Many-to-one Interference Channel without CSIT}
Consider a SISO Gaussian interference channel with Rayleigh fading. In this case, let the received signals maybe expressed similar to (\ref{eq:Gaussian1})-(\ref{eq:Gaussian2}), where all the quantities are scalars; the only difference being that, in this case, the channel fade $H_{ij}$ is time-varying. Specifically, the input-output relations can be expressed as 
\begin{eqnarray*}
{Y}_i(\tau)&=&{H}_{ii}(\tau) {X}_i(\tau) + {Z}_i(\tau), i \in \Kappa_1 \\ 
{Y}_1(\tau)&=& \sum_{i \in \Kappa} {H}_{1i}(\tau) {X}_i(\tau) + {Z}_1(\tau), i \in \Kappa_1,
\end{eqnarray*}
where, all the receivers have $1$ antenna so that the inputs $X_i$, outputs $Y_i$ and the channel gains $H_{ji}$ are scalars for all $i \in \Kappa_1$ and $j=1$ or $j=i$. We assume that $H_{ij}(\tau)$ is drawn i.i.d according to a circularly symmetric Gaussian distribution $\mathcal{N}(0,\sigma_{ij}^2)$. Note that this is the classical Rayleigh fading model, where the magnitude of the fade $|H_{ij}|$ is Rayleigh distributed with parameter $\sigma_{ij}$. We assume that $H_{ij}$ is independent of $H_{i^{'}j^{'}}$ if $i \neq i^{'}$ or $j \neq j^{'}$.  The receivers are aware of the channel state information, or equivalently, the effective output at Receiver $j$, corresponding to the $\tau$th symbol can be expressed as $(Y_{j}(\tau), \mathcal{H}(\tau))$, where $\mathcal{H}(\tau)=\{H_{ij}(\tau): i = 1, \textrm{ or }, i = j, i, j \in \Kappa, \}$. We assume that the transmitters do not have CSI, so that the input codewords are independent of the channel gains. Suppose $\sum_{i=1}^{K}\frac{\sigma_{1i}}{\sigma_{ii}} \leq 1$, then, the noisy interference condition is satisfied. This can be verified on noting that, if $\sigma_{ij}$ satisfy the specified condition, the effective interference at Receiver $1$, $(V, \mathcal{H}),$ is a degraded version of $(\sum_{i\in\Kappa_1} \frac{\sigma_{1i}}{\sigma_{ii}}Y_i, \mathcal{H}).$ Thus, this condition on the variances $\sigma_{ij}$ provides a condition for noisy interference in the fading Gaussian many-to-one interference channel without CSIT.

\subsubsection*{Example 4 - Collision-based Interference Channel Model}
Here, we construct a collision-based model for the many-to-one interference channel. Intuitively, the model can be explained as follows. Transmitter $i \in \Kappa_1$ has two choices - it can transmit a symbol from a finite set $\mathcal{X}_i^{'}$, or it can choose to remain silent. The signal transmitted by Transmitter $i \in \Kappa_1$ is received perfectly at Receiver $i$. At Receiver $1$, there are two possibilities: the signal transmitted by Transmitter $1$ can be received perfectly, or it can be erased (due to a collision). The probability of collision/erasure can be designed based on the set of transmitters which are silent. Formally, the model can be constructed as follows. Consider a $K$ user many-to-one interference channel, where $\mathcal{X}_i = \{\phi\} \cup \mathcal{X}_i^{'}$, where $\mathcal{X}_i^{'}$ is a finite set which does not contain the element $\phi$. The element $\phi$ is used to indicate the case where user $i$ remains silent (Note that this symbol can be used in the code at Transmitter $i$ to convey information to the Receiver $i$). The received signals are defined as $Y_i = X_i$ for $i \in \Kappa_1$, and $\mathcal{V}=\{0, \varepsilon\}$ with $$Y_1=f_1(X_1,V)=\left\{\begin{array}{cc}X_1,& \textrm{if }V=0\\ \varepsilon &\textrm{if }V=\varepsilon\end{array}\right\}.$$
Also, $V$ is drawn based on any probability distribution $p_{V|X_{\Kappa}}$. In this channel, $V=\varepsilon$ can be interpreted as an occurrence of a collision at Receiver $1$.  In this channel, clearly, $V$ is a degraded version of $Y_{\Kappa_1}$ and therefore, the search for the sum-capacity of this channel is reduced to the search of the optimal single-letter distribution on all the inputs. Note that this model captures the traditional collision based medium access models - this can be seen by setting $V=\varepsilon$ deterministically, when any of the users that 'collide' with user $1$ are transmitting any symbol other than $\phi$, and setting $V=0$ otherwise. With the optimal input distribution, since treating interference as noise is optimal at Receiver $1$, the receiver effectively observes a binary erasure channel, with the erasure probability calculated based on the input distribution and $p_{V|X_{\Kappa}}$.

\subsection{Separability of Parallel Noisy Discrete Memoryless Many-to-one Interference Channels}
\label{sec:separable}
In this section, we consider a parallel extension of the class of discrete memoryless many-to-one interference channels introduced in Section \ref{sec:sysmod}. Specifically, we show that if the many-to-one interference channel formed over \emph{each} carrier (parallel component) of the parallel channel satisfies the noisy interference condition of Theorem \ref{thm:noisy}, then \emph{separate} random coding - sending independent random codewords over each carrier - along with treating interference as noise is sum-capacity optimal. We now proceed to describe the model and our result formally.

The class of $F$-carrier parallel $K$-user discrete many-to-one interference channels considered can be represented by the set of $K$ (vector) inputs $\vec{X}_i = (X_i^{(1)},X_i^{(2)}, \ldots, X_i^{(F)}) \in \mathcal{X}^{(1)} \times \mathcal{X}^{(2)}\ldots \mathcal{X}^{(F)}$ and a set of $K$ (vector) outputs $\vec{Y}_i=(Y_i^{(1)},Y_i^{(2)}, \ldots, Y_i^{(F)})\in \mathcal{Y}^{(1)}\times \mathcal{Y}^{(2)}\ldots \mathcal{Y}^{(F)}$ for $i=1,2,\ldots,K.$ The $k$th component of output, $Y_i^{(k)}$ is determined by the $k$th component of the inputs $X_{j}^{(k)}, j=1,2,\ldots,K$ by any member of the class of discrete memoryless many-to-one interference channels defined earlier in Section \ref{sec:sysmod}. Note that the variables $\vec{V}=(V^{(1)}, V^{(2)}, \ldots,V^{(F)})$ and the functions $\vec{f}_1=(f_1^{(1)},f_1^{(2)},\ldots,f_1^{(F)})$ are used in defining the output are Receiver $1$, with $V^{(k)}, f_1^{(k)}$ used in defining the $k$th component of the output. There are $K$ messages in the system $W_1,W_2,\ldots,W_K$. The definition of a code of length $T$, the probability of error, the corresponding rate of the code, the capacity region and the sum-capacity are defined similar to the Section \ref{sec:sysmod}. The only difference is that, here, we restrict our study to constraints on the codewords, which ensure that either \begin{itemize}

\item $H\left(Y_i^{(k)T}\right)$ and $H\left(Y_i^{(k)T}|X_{i}^{(k)T}\right)$ exist for $k=1,2,\ldots,F$ and $i=1,2,\ldots,K$, or
\item $h\left(Y_i^{(k)T}\right)$ and $h\left(Y_i^{(k)T}|X_i^{(k)T}\right)$ exist for $k=1,2,\ldots,F$ and $i=1,2,\ldots,K$,
\end{itemize}
where, as before, $T$ denotes the length of the codeword. The class of parallel discrete memoryless many-to-one interference channels considered here is, in fact, a special case of the class of channels defined in Section \ref{sec:sysmod}.
Before we proceed to our main result, we provide a parallel extension of Lemma \ref{lemma:extremal}.
\begin{lemma}
\label{lemma:separable}
{\it
Consider random sequences $\vec{A}^{T},\vec{B}^{T},\vec{C}^{T}$ such that $\vec{A}=(A^{(1)},A^{(2)},\ldots,A^{(F)})$ and similarly, $\vec{B}$ and $\vec{C}$ represent $F$-dimensional vectors/tuples. $\vec{B}^T,\vec{C}^T$ are generated as $$p\left(\vec{B}^{T},\vec{C}^{T},\vec{A}^{T}\right)=p\left(\vec{A}^{T}\right)\displaystyle\prod_{k=1}^{F}\prod_{\tau=1}^{T}p\left(B^{(k)}(\tau)|A^{(k)}(\tau)\right)p\left(C^{(k)}(\tau)|B^{(k)}(\tau)\right),$$ where $p\left(B^{(k)}(\tau)|A^{(k)}(\tau)\right)=p_{B|A}^{(k)}, p\left(C^{(k)}(\tau)|B^{(k)}(\tau)\right)= p_{C|B}^{(k)}$, for all $\tau=1,2,\ldots,T, k=1,2,\ldots,F$. Note that the sequence $\vec{A}^{T}$ does \emph{not} have to be generated in an i.i.d fashion. The sequences $\vec{B}^{T},\vec{C}^{T}$ can be interpreted to be outputs of a physically degraded parallel ($F$-carrier) discrete memoryless broadcast channel whose (vector) input sequence is the $\vec{A}^T$. Also note that $\vec{A}^{T} \rightarrow \vec{B}^{T} \rightarrow \vec{C}^{T}$. Now, if $H(B^{(k)T}),H(C^{(k)T})$ exist for all $k=1,2,\ldots,F$, then 
$$H(\vec{B}^{T})-H(\vec{C}^{T}) \leq \sum_{k=1}^{F}\sum_{\tau=1}^{T} \left(H\left(B^{(k)}(\tau)\right)-H\left(C^{(k)}(\tau)\right)\right)$$
If $h(B^{(k)T}),h(C^{(k)T})$ exist for $k=1,2,\ldots,F$, then 
$$h(\vec{B}^{T})-h(\vec{C}^{T}) \leq \sum_{k=1}^{F} \sum_{\tau=1}^{T} \left(h\left(B^{(k)}(\tau)\right)-h\left(C^{(k)}(\tau)\right)\right)$$
Further, suppose that the channels describing $\vec{B}$ and $\vec{C}$ are {Gaussian} parallel broadcast channels, with an average covariance constraint on the input corresponding to each carrier, i.e., with 
% $\mathcal{A} = \mathbb{C}^{M}$,  $p_{B|A} \thicksim \mathcal{N}(0,{\Lambda}_1)$ and $p_{C|B}\thicksim \mathcal{N}(0,{\Lambda}_2)$, or equivalently, let there be variables $Z_i, i=1,2$ generated i.i.d according to $Z_i \thicksim \mathcal{N}(0, \Lambda_i), i \in \{1,2\}$, and $\forall \tau \in \{1,2,\ldots,T\}$
%\begin{eqnarray*}
%B(\tau)&=&A(\tau)+Z_1(\tau)\\
%C(\tau)&=&B(\tau)+Z_2(\tau)
%\end{eqnarray*}
 %Also, let consider a covariance matrix constraint on the sequence $A^T$, i.e.,
$$ E\left[\frac{1}{T}\sum_{\tau=1}^{T} ||{A}^{(k)}(\tau){A}^{(k)}(\tau)^{\dagger}\right] = \Gamma^{(k)}$$ for some set of covariance matrices $\Gamma^{(k)},k=1,2,\ldots,F$, then i.i.d circularly symmetric Gaussian distribution on $\vec{A}^{T}$, with each $A^{(k)}$ independent of $A^{(\overline{k})}$ for $k \neq \overline{k}$ maximizes the quantity $h(\vec{B}^{T})-h(\vec{C}^{T})$.}
\end{lemma}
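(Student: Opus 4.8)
The plan is to reduce Lemma~\ref{lemma:separable} to the single-carrier Lemma~\ref{lemma:extremal} by first \emph{separating the carriers} and then invoking the single-carrier result on each resulting term. The only genuinely new ingredient is a carrier-separation inequality, and the nice feature is that one telescoping-plus-data-processing argument handles the entropy claim, the differential-entropy claim, and supplies the upper bound needed for the Gaussian maximization.

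First I would establish the carrier-separation bound
$$ h(\vec{B}^T) - h(\vec{C}^T) \;\le\; \sum_{k=1}^{F}\left( h(B^{(k)T}) - h(C^{(k)T})\right), $$
together with its verbatim analogue with $H$ in place of $h$. To do this I introduce the hybrid collections $G_k = (B^{(1)T},\ldots,B^{(k-1)T},C^{(k+1)T},\ldots,C^{(F)T})$ — the clean observations on carriers below $k$ and the degraded observations on carriers above $k$ — and telescope $h(\vec{B}^T)-h(\vec{C}^T)$ across $k$ by swapping $C^{(k)T}$ for $B^{(k)T}$ one carrier at a time, so that each increment equals $h(B^{(k)T}\mid G_k)-h(C^{(k)T}\mid G_k)$. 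The claim then reduces to $I(G_k;C^{(k)T})\le I(G_k;B^{(k)T})$ for every $k$. The crux — and the step I expect to be the main obstacle — is verifying the Markov chain $G_k \rightarrow A^{(k)T} \rightarrow B^{(k)T} \rightarrow C^{(k)T}$, after which the inequality is immediate from the data-processing inequality. The within-carrier degradedness $A^{(k)T}\rightarrow B^{(k)T}\rightarrow C^{(k)T}$ is given, so the real work is to show that $G_k$ is conditionally independent of $(B^{(k)T},C^{(k)T})$ given $A^{(k)T}$. This is exactly where the product factorization is used: carrier $k$'s outputs are generated from $A^{(k)T}$ through noise variables independent of all other carriers, while $G_k$ is a function of the other carriers' inputs and noises alone, so conditioning on $A^{(k)T}$ decouples them. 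The subtlety to get right is that the \emph{inputs} $\{A^{(j)T}\}_j$ may be arbitrarily correlated, yet this input correlation does not break the conditional independence. This argument is the carrier-index analogue of the time-index reduction underlying Lemma~\ref{lemma:extremal}, and composing the two gives the first two claims: applying Lemma~\ref{lemma:extremal} to each single-carrier term $h(B^{(k)T})-h(C^{(k)T})$ (the per-carrier channel being time-invariant) yields $\sum_\tau\left(h(B^{(k)}(\tau))-h(C^{(k)}(\tau))\right)$, and likewise for $H$.

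For the Gaussian maximization I would combine the carrier-separation bound with the Gaussian part of Lemma~\ref{lemma:extremal}. The upper bound follows by bounding $h(\vec{B}^T)-h(\vec{C}^T)\le\sum_k\left(h(B^{(k)T})-h(C^{(k)T})\right)$ and invoking Lemma~\ref{lemma:extremal} on each carrier under its own constraint $\tfrac1T\sum_\tau E[A^{(k)}(\tau)A^{(k)}(\tau)^\dagger]=\Gamma^{(k)}$, which shows each term is maximized by an i.i.d.\ circularly symmetric Gaussian $A^{(k)T}$ of covariance $\Gamma^{(k)}$. For achievability I would check that choosing the inputs i.i.d.\ in time, Gaussian, and \emph{independent across carriers} makes the separation bound tight: under cross-carrier independence of both inputs and noises, each $G_k$ becomes independent of $(B^{(k)T},C^{(k)T})$, so $I(G_k;B^{(k)T})=I(G_k;C^{(k)T})=0$, every data-processing step holds with equality, and $h(\vec{B}^T)-h(\vec{C}^T)=\sum_k\left(h(B^{(k)T})-h(C^{(k)T})\right)$. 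Matching this with the per-carrier Gaussian maxima shows the independent i.i.d.\ Gaussian input attains the upper bound, completing the proof.
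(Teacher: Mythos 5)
Your proof is correct, but it takes a genuinely different route from the paper's. The paper does not prove a carrier-separation inequality at all: it simply re-runs the proof of Lemma~\ref{lemma:extremal} with the pair $(k,\tau)$ playing the role of the single time index. That is, it writes $h(\vec{B}^{T})-h(\vec{C}^{T})=h(\vec{B}^{T}|\vec{C}^{T})+h(\vec{B}^{T}|\vec{A}^{T})-h(\vec{C}^{T}|\vec{A}^{T})-h(\vec{B}^{T}|\vec{C}^{T},\vec{A}^{T})$, and then single-letterizes all four terms simultaneously over both carrier and time, using the chain rule, ``conditioning reduces entropy,'' and the product factorization of the channel over all $(k,\tau)$ pairs; the Gaussian claim is then dispatched by the same convexity remark as in Lemma~\ref{lemma:extremal}. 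You instead modularize: you first prove $h(\vec{B}^T)-h(\vec{C}^T)\le\sum_k\bigl(h(B^{(k)T})-h(C^{(k)T})\bigr)$ by telescoping through the hybrid vectors $G_k$ and applying the data-processing inequality, and then invoke Lemma~\ref{lemma:extremal} per carrier as a black box. Your key step is sound: the factorization gives $p(\vec{C}^T|\vec{B}^T,\vec{A}^T)=\prod_{j,\tau}p\bigl(C^{(j)}(\tau)|B^{(j)}(\tau)\bigr)$ and $p(B^{(k)T},G_k|A^{(k)T})=p(B^{(k)T}|A^{(k)T})\,p(G_k|A^{(k)T})$, so the chain $G_k\rightarrow A^{(k)T}\rightarrow B^{(k)T}\rightarrow C^{(k)T}$ holds even with arbitrarily correlated inputs across carriers, which is exactly the point you flag. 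What each approach buys: the paper's is shorter given that Lemma~\ref{lemma:extremal} has already been proved in full, since the identical manipulation needs only double subscripts; yours isolates the genuinely new content (cross-carrier input correlation cannot help because each carrier's outputs are conditionally independent of the rest given that carrier's input), reuses the earlier lemma rather than repeating its algebra, and replaces the paper's terse convexity remark for the Gaussian part with an explicit upper-bound-plus-achievability matching, including the tightness check that cross-carrier independence turns every data-processing step into an equality. Both arguments share the same mild existence caveats on conditional (differential) entropies, so your proof is at the same level of rigor as the paper's.
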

The proof of the lemma, which is similar to the proof of Lemma \ref{lemma:extremal}, is placed in Appendix \ref{app:separable}.
\begin{corollary}
{\it 
Consider a parallel discrete memoryless many-to-one interference channel, where each of these parallel channels satisfy the noisy interference conditions of Theorem \ref{thm:noisy}, i.e., where 
$V^{(k)}$ is a degraded version of $Y_{\Kappa}^{(k)}$ w.r.t $\mathcal{X}_{\Kappa}^{(k)}$. Then separate random coding over each of the parallel carriers and treating interference as noise achieves sum-capacity. The sum-capacity $C_{\Sigma}$ can therefore be written as 
$$ C_{\Sigma} = \sum_{k=1}^{F}\sum_{i=1}^{K} I(X_i^{(k)}; Y_i^{(k)})$$}
\end{corollary}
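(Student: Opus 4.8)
The plan is to establish the claim by a matching pair of bounds in which \emph{achievability} is inherited carrier-by-carrier from Theorem~\ref{thm:noisy}, while the substantive work lies in a \emph{converse} showing that joint coding across carriers buys nothing. Since the $F$-carrier parallel channel is itself a member of the class of Section~\ref{sec:sysmod}, and $\vec{V}=(V^{(1)},\dots,V^{(F)})$ is a degraded version of $\vec{Y}_{\Kappa_1}$ whenever each $V^{(k)}$ is degraded from $Y^{(k)}_{\Kappa_1}$, Theorem~\ref{thm:noisy} already yields $C_\Sigma=\max_{\prod_i p_{\vec{X}_i}}\sum_{i\in\Kappa} I(\vec{X}_i;\vec{Y}_i)$. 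The goal is therefore to show that this product-across-users maximization is attained by an input that \emph{also} factorizes across carriers, and that its value equals $\sum_{k=1}^F\sum_{i=1}^K I(X_i^{(k)};Y_i^{(k)})$. For the lower bound I would simply run the Theorem~\ref{thm:noisy} scheme independently on each carrier: since the carriers are non-interacting, separate random codebooks and per-carrier decoding (treating interference as noise) achieve the sum of the per-carrier sum-capacities, and each per-carrier sum-capacity equals $\sum_i\max_{p_{X_i^{(k)}}} I(X_i^{(k)};Y_i^{(k)})$ because every term there depends only on its own marginal.

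For the converse I would fix any product-across-users distribution and bound $\sum_{i\in\Kappa} I(\vec{X}_i;\vec{Y}_i)$ at the single-letter level. First I split off Receiver~$1$: using the invertibility (\ref{eqn:invertible}) of $\vec{V}$ from $(\vec{Y}_1,\vec{X}_1)$ together with the independence of $\vec{V}$ from $\vec{X}_1$, I get $I(\vec{X}_1;\vec{Y}_1)=h(\vec{Y}_1)-h(\vec{V})$. For each $i\in\Kappa_1$ the interference-free link factorizes across carriers, so $I(\vec{X}_i;\vec{Y}_i)=h(\vec{Y}_i)-\sum_k h(Y_i^{(k)}\mid X_i^{(k)})$. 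Because the inputs are independent across users, the outputs $\vec{Y}_i,\ i\in\Kappa_1$, are mutually independent, whence $\sum_{i\in\Kappa_1} h(\vec{Y}_i)=h(\vec{Y}_{\Kappa_1})$, and I bound $h(\vec{Y}_1)\le\sum_k h(Y_1^{(k)})$. The crucial step is to single-letterize the coupled difference $h(\vec{Y}_{\Kappa_1})-h(\vec{V})$ across carriers: I apply Lemma~\ref{lemma:separable} to the degraded parallel broadcast chain $\vec{X}_{\Kappa_1}\rightarrow\vec{Y}_{\Kappa_1}\rightarrow\vec{V}$, whose per-carrier degradedness is exactly the corollary's hypothesis, obtaining $h(\vec{Y}_{\Kappa_1})-h(\vec{V})\le\sum_k\big(h(Y^{(k)}_{\Kappa_1})-h(V^{(k)})\big)$. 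Using independence across users once more, $h(Y^{(k)}_{\Kappa_1})=\sum_{i\in\Kappa_1}h(Y_i^{(k)})$, and recombining carrier $k$'s terms gives $h(Y_1^{(k)})-h(V^{(k)})=I(X_1^{(k)};Y_1^{(k)})$ together with $h(Y_i^{(k)})-h(Y_i^{(k)}\mid X_i^{(k)})=I(X_i^{(k)};Y_i^{(k)})$ for $i\in\Kappa_1$. Summing yields $\sum_{i\in\Kappa} I(\vec{X}_i;\vec{Y}_i)\le\sum_{k}\sum_{i} I(X_i^{(k)};Y_i^{(k)})$, and since each term on the right depends only on the single marginal $p_{X_i^{(k)}}$, I may maximize them independently, which matches the achievability value. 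The discrete-alphabet case is identical with $h$ replaced by $H$, invoking the $H$-version of Lemma~\ref{lemma:separable}.

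The main obstacle is precisely the interference term at Receiver~$1$, which is the only place the carriers are coupled. A naive proof that merely bounds each receiver's mutual information by its own per-carrier sum---the standard parallel-channel argument---fails here, because that argument would require $h(\vec{V})\ge\sum_k h(V^{(k)})$ to convert $-h(\vec{V})$ into $-\sum_k h(V^{(k)})$, whereas subadditivity gives the \emph{opposite} inequality $h(\vec{V})\le\sum_k h(V^{(k)})$. The resolution---and the reason Lemma~\ref{lemma:separable} rather than plain subadditivity is needed---is to refrain from isolating Receiver~$1$: one first merges its output entropy with the aggregate output entropy $h(\vec{Y}_{\Kappa_1})$ of the degraded receivers, and only then single-letterizes the \emph{difference} via the degraded-broadcast extremal inequality, which is valid in the correct direction across carriers. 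I expect no equality analysis to be required to close the argument, since the lower bound supplied by separate per-carrier application of Theorem~\ref{thm:noisy} already equals $\sum_k\sum_i\max_{p_{X_i^{(k)}}} I(X_i^{(k)};Y_i^{(k)})$, coinciding with the converse upper bound; the matching of these two independently derived quantities is what certifies separability.
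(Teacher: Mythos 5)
Your overall architecture is sound and is essentially a repackaging of the paper's (omitted) proof: the paper reruns the converse of Theorem~\ref{thm:noisy} over blocklength $T$ with Lemma~\ref{lemma:separable} replacing Lemma~\ref{lemma:extremal}, whereas you invoke Theorem~\ref{thm:noisy} as a black box on the vector channel (valid, since per-carrier degradedness of $V^{(k)}$ from $Y^{(k)}_{\Kappa_1}$ composes to degradedness of $\vec{V}$ from $\vec{Y}_{\Kappa_1}$) and then apply Lemma~\ref{lemma:separable} with $T=1$ to split the coupled difference $h(\vec{Y}_{\Kappa_1})-h(\vec{V})$ across carriers. That inequality chain, including your observation that plain subadditivity of $h(\vec{V})$ points in the wrong direction and that one must single-letterize the \emph{difference} via the degraded-broadcast inequality, is correct and is exactly the role Lemma~\ref{lemma:separable} plays in the paper's own argument.

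There is, however, a genuine error in how you close both bounds: you assert that each term $I(X_i^{(k)};Y_i^{(k)})$ ``depends only on its own marginal $p_{X_i^{(k)}}$'' and therefore maximize term-by-term, claiming both the achievable rate and the converse bound equal $\sum_k\sum_i \max_{p_{X_i^{(k)}}} I(X_i^{(k)};Y_i^{(k)})$. This is false for $i=1$: $I(X_1^{(k)};Y_1^{(k)}) = h(Y_1^{(k)})-h(V^{(k)})$ depends on the carrier-$k$ marginals of \emph{all} interfering users through $V^{(k)}$, and the maximizers conflict --- the interferers' terms push toward entropy-maximizing inputs, which inflate $h(V^{(k)})$ and reduce user $1$'s term. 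A one-carrier counterexample: $K=2$, binary alphabets, $Y_2=X_2=V$, $Y_1=X_1\oplus X_2$; then per-user maximization gives $1+1=2$ bits, while the true sum-capacity is $\max H(X_1\oplus X_2)=1$ bit. So the common value you claim for the two bounds is neither achievable nor equal to the sum-capacity, and in particular your stated per-carrier achievability figure is wrong. The fix is immediate from what you already derived: follow your (correct) inequality $\sum_{i\in\Kappa} I(\vec{X}_i;\vec{Y}_i)\le \sum_{k}\sum_{i\in\Kappa} I(X_i^{(k)};Y_i^{(k)})$, evaluated at the induced per-carrier marginals, by a \emph{per-carrier but joint-over-users} maximization, yielding $C_\Sigma \le \sum_{k} \max_{\prod_{i}p_{X_i^{(k)}}}\sum_{i\in\Kappa} I(X_i^{(k)};Y_i^{(k)})$; this is exactly what separate per-carrier random coding with treating interference as noise (Theorem~\ref{thm:noisy} applied to each carrier) achieves, and it is the correct reading of the corollary's capacity expression.
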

The proof of the above corollary omitted here, since it is almost identical to the proof of Theorem \ref{thm:noisy}, with Lemma \ref{lemma:separable} used in the proof, in place of Lemma \ref{lemma:extremal}. Since the class of parallel many-to-one interference channels described above is a special case of the class of many-to-one channels described in Theorem \ref{thm:noisy}, the optimality of random coding and treating interference as noise simply follow from the theorem - the additional insight of the above corollary is that the optimal input distribution involves the principle of separate coding, i.e., in the optimal input distribution, $X_{i}^{(k)}$ is independent of $X_i^{(\overline{k})}$ for $k \neq \overline{k}$ for all $i \in \Kappa$. The above corollary automatically implies that a set of parallel Gaussian many-to-one interference channels, each of which satisfies the noisy interference condition, is separable.
\section{On Interference Alignment In Noisy Interference Regimes and other Insights from Deterministic Many-to-One Interference Channels}
\label{sec:deterministic}

The deterministic many-to-one interference channel is the channel as described earlier, where $\mathcal{X}_i, \mathcal{Y}_i, \mathcal{V}$ are all finite and
\begin{equation} H(Y_i|X_i)=H(V|X_{\Kappa_1})=0, \forall i \in \Kappa_1, \label{eq:deterministic}\end{equation}
for all possible distributions on the input. Note that in this channel, the outputs can be uniquely determined from the set of inputs of the channel. Also note that this class of channels captures the deterministic framework proposed by \cite{Avestimehr_Diggavi_Tse}, and studied in the many-to-one interference channel setting in \cite{Bresler_Parekh_Tse}. We next proceed to understand the idea of interference alignment via random codes in this deterministic framework.

\subsection{Discussion : Why is explicit interference alignment not required in the noisy interference regime?}
\label{sec:discussion}
For the deterministic many-to-one channel defined in (\ref{eq:deterministic}), we present the noisy interference regime in a slightly different form which leads to interesting interpretation later in this section.
\begin{corollary}
\label{cor:noisy}
{\it
In the many-to-one interference channel, if there exists a function $q$ such that $V=q(Y_{\Kappa_1})$, then the sum-capacity is given by 
$$ C_{\Sigma} = \max_{\prod_{i \in \Kappa} p(X_i)}H(Y_1)+H(Y_{\Kappa_1}|V)$$}
\end{corollary}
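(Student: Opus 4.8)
The plan is to obtain this corollary directly from Theorem \ref{thm:noisy} and then rewrite the resulting single-letter sum using the deterministic structure. First I would observe that the hypothesis $V=q(Y_{\Kappa_1})$ makes $V$ a (deterministic, hence trivially stochastically) degraded version of $Y_{\Kappa_1}$: the conditional $p_{V|Y_{\Kappa_1}}$ places all its mass on $q(Y_{\Kappa_1})$, so $X_{\Kappa_1}\to Y_{\Kappa_1}\to V$ holds and the degradedness premise of Theorem \ref{thm:noisy} is met. Invoking the theorem immediately yields
$$ C_{\Sigma}=\max_{\prod_{i\in\Kappa}p(X_i)}\sum_{i\in\Kappa}I(X_i;Y_i),$$
so it remains only to show that for \emph{every} product input distribution the summand equals $H(Y_1)+H(Y_{\Kappa_1}|V)$.

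Next I would expand the mutual-information terms one at a time. For each $i\in\Kappa_1$, condition (\ref{eq:deterministic}) gives $H(Y_i|X_i)=0$, so $I(X_i;Y_i)=H(Y_i)$. For the $i=1$ term I would write $I(X_1;Y_1)=H(Y_1)-H(Y_1|X_1)$ and invoke the invertibility relation (\ref{eqn:invertible}): conditioned on $X_1$, the map $V\mapsto f_1(X_1,V)=Y_1$ is a bijection, whence $H(Y_1|X_1)=H(V|X_1)$. Since the inputs factorize and $V$ is a function of $X_{\Kappa_1}$ alone, $V$ is independent of $X_1$, giving $H(V|X_1)=H(V)$ and therefore $I(X_1;Y_1)=H(Y_1)-H(V)$.

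Assembling these, $\sum_{i\in\Kappa}I(X_i;Y_i)=H(Y_1)-H(V)+\sum_{i\in\Kappa_1}H(Y_i)$. The two remaining identities I would establish are: (i) because the inputs $X_i$ are independent and each $Y_i$ with $i\in\Kappa_1$ depends only on $X_i$, the outputs $Y_2,\ldots,Y_K$ are mutually independent, so $\sum_{i\in\Kappa_1}H(Y_i)=H(Y_{\Kappa_1})$; and (ii) from $V=q(Y_{\Kappa_1})$ we have $H(V|Y_{\Kappa_1})=0$, hence $H(Y_{\Kappa_1}|V)=H(Y_{\Kappa_1})-I(Y_{\Kappa_1};V)=H(Y_{\Kappa_1})-H(V)$. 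Substituting (i) and (ii) collapses the sum to exactly $H(Y_1)+H(Y_{\Kappa_1}|V)$, as claimed.

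There is no serious analytic obstacle here — the result is essentially a bookkeeping rewrite of Theorem \ref{thm:noisy} — so the only thing to be careful about is that each entropy identity holds termwise for an arbitrary fixed product distribution, not merely the optimizing one, which is what permits commuting the rewrite through the maximization. The one place I would verify explicitly is the $i=1$ term, where the combined use of invertibility \emph{and} the independence of $V$ from $X_1$ is needed; equating $H(Y_1|X_1)$ with $H(V)$ without both ingredients would be the natural slip to avoid.
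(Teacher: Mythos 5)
Your proof is correct and takes essentially the same route as the paper's: both invoke Theorem \ref{thm:noisy} (with $V=q(Y_{\Kappa_1})$ supplying the degradedness hypothesis) and then rewrite $\sum_{i\in\Kappa}I(X_i;Y_i)$ using $H(Y_i|X_i)=0$ for $i\in\Kappa_1$, the invertibility-plus-independence identity $I(X_1;Y_1)=H(Y_1)-H(V)$, and the mutual independence of $Y_2,\ldots,Y_K$ together with $H(V|Y_{\Kappa_1})=0$ to collapse the sum to $H(Y_1)+H(Y_{\Kappa_1}|V)$. Your write-up simply makes explicit steps the paper leaves implicit, such as verifying the degradedness premise and that each identity holds for every product input distribution.
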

\begin{proof}
The proof follows from setting $H(Y_i|X_{i})=H(V|X_{\Kappa_1})=0$ for all $i \in \Kappa_i$ in Theorem \ref{thm:noisy}, and noting that 
$$\sum_{i\in\Kappa}H(Y_i)-H(V)=H(Y_{\Kappa_1})-H(V)=H(Y_{\Kappa_1}|V)$$
since $Y_i$ is independent of $Y_{j}$ for $i \neq j$, and $V$ is a deterministic function of $Y_{\Kappa_1}$.
\end{proof}

\subsubsection*{Remark 1}
The above class of channels can be considered to be \emph{weak} interference channels, because the condition of the result implies that the effective interference at Receiver $1$ must be reconstructible from signals received at all receivers $i \neq 1$. The fact that random coding and treating interference as noise in the weak many-to-one interference channels is optimal can also be verified in the class of deterministic many-to-one interference channels studied by Bresler, Parekh and Tse \cite{Bresler_Parekh_Tse}.

For a better understanding of the noisy interference regime, let us take a closer look at the idea of interference alignment over the deterministic $3$-user many-to-one channel. Over this channel, consider a random coding scheme of length $T$, such that it generates $2^{TR_2}$ typical sequences of $X_{2}^T$ and $2^{TR_3}$ typical sequences of $X_{3}^T$. Since messages in the system are distributed uniformly, this means that $H(X_{i}^T)=TR_i,i=2,3$. Now, if $(R_2,R_3)$ lie in the achievable rate region (with these distributions) of the multiple access channel formed with inputs $X_2,X_3$ and output $V$, then the sequences $X_{2}^T$ and $X_{3}^T$ are invertible (i.e., decodable) from $V^T$. This is the case because random coding is optimal in the multiple access channel. In other words, the sequences $X_{2}^T,X_{3}^T$ are \emph{not} aligned; in fact, they are resolvable from $V^N$, and each $V^N$ sequence therefore corresponds to a unique $X_{2}^T,X_{3}^T$ sequence pair with high probability. Such codewords would satisfy
\begin{eqnarray*}H(V^T) &\approx& T(R_2+R_3)=H(X_{2}^T)+H(X_{3}^T)\\&=&H(X_{2}^T,X_{3}^T) \end{eqnarray*}
The approximation sign is used above rather than equality, since the comparison is in an asymptotic sense.
Contrary to the above scenario, if $(R_2,R_3)$ lie \emph{outside} the rate region achieved with these distributions in the multiple access channel, then even with random coding, the sequences $X_{2}^T,X_{3}^T$ align. In particular, in the noisy interference regime, $V=q(Y_2,Y_3)$ and in achievable scheme, $R_i=H(Y_i)$, which imply that 
\begin{eqnarray*}H(V^T) < T(R_2+R_3)= H(Y_{2}^T,Y_{3}^T) \end{eqnarray*}
as long as $q$ is a non-invertible function (The case of $q$ being invertible falls in the class of the resolvable interference regime discussed later in this section). Note that the above condition holds in the noisy interference regime, \emph{for every possible input distribution.} Thus, explicit alignment by an appropriate choice of input distribution or using multi-letter based structured coding is not required, and random codes automatically align interference in the above channel. The noisy nature of the interference can be explained by the insights of reference \cite{Wu_Xie_AEP}, which noted that on a single-user channel, if a random code of a rate higher than a user's capacity is used, then the signal loses structure in the sense that the output satisfies an equipartition property independent of the codebook used. This loss of structure can be used to explain the noisy nature of the interference, on noting that for any given input distribution, with random coding, $R_i = H(Y_i) \geq I(V;X_i|X_{\Kappa-\{i\}})=H(V|X_{\Kappa - \{i\}})$ in the noisy interference regime; in other words, from the perspective of a receiver with output $V$, the rate of transmission of the user is higher than the corresponding user's mutual information and thus the interfering signal loses any structure imposed by its codebook. The additional insight here is that, if the rate of each incoming signal at a receiver is higher than that user's mutual information, then, not only does the signal lose its structure, but the multiple signals also align. In fact, we will argue later in this section (Section \ref{sec:discussion}) that the extent of alignment is also the maximum possible in this case. These insights carry through to the Gaussian case as well, where, if each user $i \in \Kappa_1$ transmit at rates corresponding their single-user capacity, then the interference is noisy, and is aligned - even by circularly symmetric Gaussian distributions at all inputs.
\subsection{Resolvable Interference Regime for Deterministic Many-to-one Interference Channels}
\label{sec:resolvable}
In this class of many-to-one interference channels, interference cannot be aligned since the multiple interferers at the first receiver are resolvable. Since alignment is not possible, random coding achieves the capacity region. We first show inner and outer bounds for the deterministic many-to-one interference channel, respectively, in Theorems \ref{thm:outerbound} and \ref{thm:ach}. We then find conditions where these bounds are tight to define the resolvable interference regime in Corollary \ref{cor:resolvable}. The bounds and the regime are all defined in terms of auxiliary variables $U_{\Kappa_1}=(U_2,U_3,\ldots,U_{K})$ such that
\begin{itemize}
\item $U_i$ is a deterministic function of $X_i$ and 
\item $V$ is a deterministic function of $(U_2,U_3,\ldots,U_K)$. 
\end{itemize}
Note that $U_i=X_i$ provides a trivial assignment of auxiliary variables $U_i$. However, the bounds can be optimized over all possible choices of $U_i$ satisfying these properties. We now proceed to describe an outer bound on the capacity region of the many-to-one interference channel.

\begin{theorem}
\label{thm:outerbound}
{\it 
The capacity region of the deterministic many-to-one interference channel lies in the convex hull of the following region, over all possible product distributions $\prod_{i\in\mathcal{K}} p_{X_i}(x_i).$ 
\allowdisplaybreaks{
\begin{eqnarray}
R_1 &\leq& H(Y_1|V) \label{eqn:outerbound1}\\
R_i &\leq& H(Y_i), i\in \Kappa_1 \label{eqn:outerbound2}\\
R_1+\sum_{i \in \mathcal{S}}R_i  &\leq& H(Y_1|{U}_{\mathcal{S}^c})+H(Y_{\mathcal{S}}|V, {U}_{\mathcal{S}^{c}}), \forall \mathcal{S} \subseteq \Kappa_1, \label{eqn:outerbound3}
\end{eqnarray}
}
%+ \sum_{i \in \mathcal{S}} H(Y_i|V_i)\nonumber \\&&+H(V_{\mathcal{S}}|V,V_{\mathcal{S}^c}), \forall \mathcal{S} \subseteq \Kappa_1, \label{eqn:outerbound3}
where $\mathcal{S}^{c}$ represents the complement of $\mathcal{S}$ w.r.t $\Kappa_1$. }
\end{theorem}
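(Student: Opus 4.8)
The plan is a standard Fano-plus-genie converse, single-letterized by a time-sharing argument; the three families of inequalities are established separately, with (\ref{eqn:outerbound3}) carrying all the real content. Throughout I fix a valid choice of auxiliary functions $U_i = u_i(X_i)$ (so that $V$ is a deterministic function of $U_{\Kappa_1}$) and exploit three structural facts of the deterministic model: (i) for $i \in \Kappa_1$, $Y_i^T$ and $U_i^T$ are deterministic functions of $W_i$, so the outputs and auxiliaries of distinct users are mutually independent and $H(Y_i^T\mid W_i)=0$; (ii) $V^T$ is a function of $W_{\Kappa_1}$ and hence independent of $W_1$; and (iii) by the invertibility assumption (\ref{eqn:invertible}), conditioned on $X_1^T$ the map $V^T \leftrightarrow Y_1^T$ is a bijection. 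By Fano's inequality, any reliable code satisfies $T R_i \le I(W_i; Y_i^T) + T\epsilon_T$ with $\epsilon_T \to 0$.

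For (\ref{eqn:outerbound2}) I would simply bound $I(W_i; Y_i^T) \le H(Y_i^T) \le \sum_\tau H(Y_i(\tau))$. For (\ref{eqn:outerbound1}) I would hand receiver $1$ the genie $V^T$: since $W_1$ is independent of $V^T$ we have $I(W_1; Y_1^T) \le I(W_1; Y_1^T \mid V^T) = H(Y_1^T \mid V^T)$, the last equality because $Y_1^T = f_1(X_1^T, V^T)$ is determined by $(W_1, V^T)$; chain-rule dropping and ``conditioning reduces entropy'' then give $H(Y_1^T \mid V^T) \le \sum_\tau H(Y_1(\tau) \mid V(\tau))$.

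The heart of the argument is (\ref{eqn:outerbound3}). Fix $\mathcal{S} \subseteq \Kappa_1$ and supply the genie $U_{\mathcal{S}^c}^T$. For the users in $\mathcal{S}$, fact (i) gives $\sum_{i \in \mathcal{S}} I(W_i; Y_i^T) = \sum_{i \in \mathcal{S}} H(Y_i^T) = H(Y_{\mathcal{S}}^T) = H(Y_{\mathcal{S}}^T \mid U_{\mathcal{S}^c}^T)$, the middle equality by independence of the $Y_i^T$ and the last because $U_{\mathcal{S}^c}^T$ is independent of $W_{\mathcal{S}}$. For receiver $1$, using that $U_{\mathcal{S}^c}^T$ is independent of $W_1$,
$$ I(W_1; Y_1^T) \le I(W_1; Y_1^T \mid U_{\mathcal{S}^c}^T) = H(Y_1^T \mid U_{\mathcal{S}^c}^T) - H(Y_1^T \mid W_1, U_{\mathcal{S}^c}^T), $$
and by fact (iii) together with $V^T \perp W_1$ the subtracted term equals $H(V^T \mid U_{\mathcal{S}^c}^T)$. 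Adding the two contributions and applying the identity $H(Y_{\mathcal{S}}^T \mid U_{\mathcal{S}^c}^T) - H(V^T \mid U_{\mathcal{S}^c}^T) = H(Y_{\mathcal{S}}^T \mid V^T, U_{\mathcal{S}^c}^T) - H(V^T \mid Y_{\mathcal{S}}^T, U_{\mathcal{S}^c}^T)$ yields
$$ I(W_1; Y_1^T) + \sum_{i \in \mathcal{S}} I(W_i; Y_i^T) \le H(Y_1^T \mid U_{\mathcal{S}^c}^T) + H(Y_{\mathcal{S}}^T \mid V^T, U_{\mathcal{S}^c}^T) - H(V^T \mid Y_{\mathcal{S}}^T, U_{\mathcal{S}^c}^T). $$
The key step is then to \emph{discard} the last term, which is nonnegative. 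I expect this to be the main obstacle: a more naive attempt tries to show $V^T$ is recoverable from $(Y_{\mathcal{S}}^T, U_{\mathcal{S}^c}^T)$, which need not hold since $U_{\mathcal{S}}$ is generally not a function of $Y_{\mathcal{S}}$; the escape is that we never need it, because the residual entropy can simply be dropped.

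It remains to single-letterize: ``conditioning reduces entropy'' gives $H(Y_1^T \mid U_{\mathcal{S}^c}^T) \le \sum_\tau H(Y_1(\tau) \mid U_{\mathcal{S}^c}(\tau))$ and $H(Y_{\mathcal{S}}^T \mid V^T, U_{\mathcal{S}^c}^T) \le \sum_\tau H(Y_{\mathcal{S}}(\tau) \mid V(\tau), U_{\mathcal{S}^c}(\tau))$. Introducing a uniform time-sharing variable $Q \in \{1,\dots,T\}$, dividing by $T$, and letting $T \to \infty$ (so $\epsilon_T \to 0$) turns each family of per-symbol bounds into the claimed single-letter inequalities. Because the messages are independent, the induced single-letter input law is a product distribution $\prod_{i \in \Kappa} p_{X_i}$ at each time instant, and the role of $Q$ is absorbed by taking the convex hull over such product distributions, which is precisely the region asserted in the statement.
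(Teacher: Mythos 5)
Your proposal is correct and follows essentially the same route as the paper's proof: Fano plus the genie $U_{\mathcal{S}^c}^T$, independence of $(Y_i^T,U_i^T)$ across users, the invertibility (\ref{eqn:invertible}) to replace $H(Y_1^T\mid X_1^T,U_{\mathcal{S}^c}^T)$ by $H(V^T\mid U_{\mathcal{S}^c}^T)$, and then single-letterization via conditioning and a convex-hull/time-sharing step. Your exact identity $H(Y_{\mathcal{S}}^T\mid U_{\mathcal{S}^c}^T)-H(V^T\mid U_{\mathcal{S}^c}^T)=H(Y_{\mathcal{S}}^T\mid V^T,U_{\mathcal{S}^c}^T)-H(V^T\mid Y_{\mathcal{S}}^T,U_{\mathcal{S}^c}^T)$ followed by discarding the nonnegative residual is precisely the paper's inequality $H(A\mid C)-H(B\mid C)\leq H(A\mid B,C)$, just made explicit.
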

The above outer bound is proved in Appendix \ref{app:outerbound}.

We now describe below, a rate achievable in general, in the deterministic many-to-one interference channel using a random coding scheme which does not align interference. The achievable scheme is similar to the one presented in \cite{Cadambe_Jafar_Vishwanath} in the context of the deterministic $Z$ channel.

\begin{theorem}
\label{thm:ach}
{\it 
For the deterministic many-to-one interference channel, the convex hull over all product input distributions $\prod_{i \in \Kappa} p_{X_i}(x_i)$, of the following rate region is achievable.
\begin{eqnarray}
R_1 &\leq& H(Y_1|V) \label{eqn:ach1}\\
R_i &\leq& H(Y_i), i\in \Kappa_1 \label{eqn:ach2}\\
R_1+\sum_{i \in \mathcal{S}}R_i  &\leq& H(Y_1|{U}_{\mathcal{S}^c})+ \sum_{i \in \mathcal{S}} H(Y_i|{U}_{i}), \nonumber \\&&\forall \mathcal{S} \subseteq \Kappa_1, \mathcal{S}=\Kappa_1-\mathcal{S}. \label{eqn:ach3} 
\end{eqnarray}
}
\end{theorem}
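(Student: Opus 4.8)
The plan is to prove achievability by a random coding scheme that \emph{resolves} rather than aligns the interference, using superposition (rate--splitting) at the interfering transmitters together with simultaneous non-unique decoding at Receiver $1$. (Throughout, $\mathcal{S}^c$ denotes the complement of $\mathcal{S}$ within $\Kappa_1$.) The observation motivating this structure is that plain i.i.d. codebooks at all users are \emph{not} sufficient: if Receiver $1$ were to treat each interferer's full message (rate $R_i$) as a nuisance to be resolved, a routine multiple--access analysis at Receiver $1$ would only yield the constraints $R_1+\sum_{i\in\mathcal{S}}R_i\leq H(Y_1|U_{\mathcal{S}^c})$, which are strictly smaller than the target region by exactly the terms $\sum_{i\in\mathcal{S}}H(Y_i|U_i)$. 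Since $Y_1$ depends on $X_i$, $i\in\Kappa_1$, only through $U_i$ (because $V$ is a function of $U_{\Kappa_1}$), Receiver $1$ should only have to resolve the $U_i$--carrying part of each interferer's message, while the extra rate $H(Y_i|U_i)$ that $Y_i$ carries beyond $U_i$ need not be decoded at Receiver $1$. This is precisely what superposition coding accomplishes.

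Concretely, I would split each message $W_i$, $i\in\Kappa_1$, into a cloud--center part of rate $R_i^{(u)}$ and a satellite part of rate $R_i^{(s)}$ with $R_i=R_i^{(u)}+R_i^{(s)}$: generate $2^{TR_i^{(u)}}$ cloud sequences $U_i^T$ i.i.d. $\sim p_{U_i}$, and for each cloud center $2^{TR_i^{(s)}}$ satellites $X_i^T$ i.i.d. $\sim p_{X_i|U_i}$ (consistent since $U_i$ is a deterministic function of $X_i$). User $1$ uses an ordinary i.i.d. codebook $X_1^T\sim p_{X_1}$. Receiver $i\in\Kappa_1$ jointly decodes both layers from $Y_i^T$; since $H(Y_i|X_i)=0$, the standard superposition constraints $R_i^{(s)}\leq H(Y_i|U_i)$ and $R_i^{(u)}+R_i^{(s)}=R_i\leq H(Y_i)$ suffice, recovering (\ref{eqn:ach2}). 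Receiver $1$ performs simultaneous non-unique decoding of $W_1$, treating the cloud centers $U_{\Kappa_1}^T$ as nuisance sequences it need not pin down. Using $H(Y_1|X_1,U_{\Kappa_1})=0$ and $H(Y_1|X_{\mathcal{S}^c})=H(Y_1|U_{\mathcal{S}^c})$, the error--event analysis over the subset $\mathcal{S}\subseteq\Kappa_1$ of cloud centers decoded in error alongside $W_1$ yields $R_1+\sum_{i\in\mathcal{S}}R_i^{(u)}\leq I(X_1,U_{\mathcal{S}};Y_1|U_{\mathcal{S}^c})=H(Y_1|U_{\mathcal{S}^c})$; note that $\mathcal{S}=\emptyset$ gives $R_1\leq H(Y_1|U_{\Kappa_1})=H(Y_1|V)$, matching (\ref{eqn:ach1}) because $X_1$ is independent of $U_{\Kappa_1}$ and $Y_1$ sees $U_{\Kappa_1}$ only through $V$.

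It then remains to eliminate the auxiliary rates $R_i^{(u)},R_i^{(s)}$ and show the resulting projection contains the target region. Given any $(R_1,\dots,R_K)$ satisfying (\ref{eqn:ach1})--(\ref{eqn:ach3}) and a product input distribution (with fixed $U_i$ assignments), I would choose the split $R_i^{(u)}=\max\{0,\,R_i-H(Y_i|U_i)\}$ and $R_i^{(s)}=\min\{R_i,\,H(Y_i|U_i)\}$. This split is feasible because $0\leq R_i^{(u)}\leq H(U_i)$, the upper bound following from $R_i\leq H(Y_i)=H(U_i)+H(Y_i|U_i)$, and it meets the per-receiver constraints by construction. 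To verify the Receiver--$1$ constraints, for each $\mathcal{S}$ let $\mathcal{S}'=\{i\in\mathcal{S}:R_i>H(Y_i|U_i)\}$; then $R_1+\sum_{i\in\mathcal{S}}R_i^{(u)}=R_1+\sum_{i\in\mathcal{S}'}\bigl(R_i-H(Y_i|U_i)\bigr)$, and applying (\ref{eqn:ach3}) to $\mathcal{S}'$ together with the monotonicity $H(Y_1|U_{\mathcal{S}'^{c}})\leq H(Y_1|U_{\mathcal{S}^{c}})$ (valid since $\mathcal{S}'\subseteq\mathcal{S}$ implies $\mathcal{S}^{c}\subseteq\mathcal{S}'^{c}$) gives $R_1+\sum_{i\in\mathcal{S}}R_i^{(u)}\leq H(Y_1|U_{\mathcal{S}^{c}})$, as required. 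A standard time--sharing argument then yields the convex hull over product input distributions.

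The main obstacle I anticipate is the final bookkeeping: verifying that one fixed rate--split simultaneously satisfies the exponentially many subset constraints at Receiver $1$ as well as all per-receiver constraints, i.e., that the Fourier--Motzkin elimination of $\{R_i^{(u)},R_i^{(s)}\}$ reproduces exactly the polymatroidal form (\ref{eqn:ach3}) rather than something strictly smaller. The reduction--to--$\mathcal{S}'$ argument above is the crux and must be carried out carefully. The remaining ingredients---the deterministic identities $H(Y_i|X_i)=0$, $H(Y_1|X_1,U_{\Kappa_1})=0$, and $H(Y_1|U_{\Kappa_1})=H(Y_1|V)$, and the correctness of simultaneous non-unique decoding---are routine.
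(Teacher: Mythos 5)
Your proposal is correct and achieves exactly the stated region, but it gets there by a genuinely different encoding technique from the paper's. The paper does not split messages: Transmitter $i \in \Kappa_1$ generates $2^{TH(Y_i)}$ i.i.d.\ $Y_i^T$ sequences binned uniformly into $2^{TR_i}$ bins (the message is the bin index), generates a separate auxiliary codebook of $2^{T\Omega_i}$ i.i.d.\ $U_i^T$ sequences with $\Omega_i$ a \emph{free parameter} decoupled from the message, and encodes by searching the bin for a $Y_i^T$ jointly typical with some $U_i^T$ --- the covering-based encoding used for deterministic broadcast channels (Cover), which can fail and therefore contributes an encoding-error event giving the constraint $R_i \leq \Omega_i + H(Y_i|U_i)$; given successful encoding, Receiver $i$ is error-free since it just reads off the bin index of the noiselessly received $Y_i^T$. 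Receiver $1$ then performs the same simultaneous non-unique decoding you propose, yielding $R_1+\sum_{i\in\mathcal{S}}\Omega_i \leq H(Y_1|U_{\mathcal{S}^c})$ for all $\mathcal{S}\subseteq\Kappa_1$, and the theorem follows by Fourier--Motzkin elimination of the $\Omega_i$. Your superposition scheme produces the identical parametric region with $R_i^{(u)}$ playing the role of $\Omega_i$ (your satellite constraint $R_i^{(s)}\leq H(Y_i|U_i)$ is the paper's covering constraint in disguise, and the Receiver-$1$ constraints coincide), so your explicit $\mathcal{S}'$ projection argument is precisely the elimination step that the paper invokes in one sentence without detail; working it out is indeed the crux, and your reduction is right. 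What your route buys: no encoding failures at the interfering transmitters, only standard superposition plus packing-lemma machinery, and a fully verified elimination. What the paper's route buys: the auxiliary rate never needs to be a message component, and the scheme directly mirrors optimal coding for the deterministic point-to-point/broadcast link. One small repair to your write-up: $H(Y_i)=H(U_i)+H(Y_i|U_i)$ is not an identity in this model, since nothing forces $H(U_i|Y_i)=0$; you only need $H(Y_i)\leq H(U_i)+H(Y_i|U_i)$ to conclude $R_i^{(u)}\leq H(U_i)$, and in fact that bound is never used in the error analysis (the wrong-cloud events are controlled by $R_i\leq H(Y_i)$ alone via the packing lemma), so the slip is harmless.
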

The proof is placed in Appendix \ref{app:ach}.
It should be noted that the above achieved rate region is loose, in general, with respect to the bound of Theorem \ref{thm:outerbound}. However, if $H({U}_{\mathcal{S}}|V)=0$ for all $\mathcal{S} \subseteq \Kappa_1$, then the achieved rate region can be verified to be optimal by comparing (\ref{eqn:outerbound1})-(\ref{eqn:outerbound3}) with (\ref{eqn:ach1})-(\ref{eqn:ach3}). We state this formally below. 
\begin{corollary}
\label{cor:resolvable}
{\it 
Consider a many-to-one interference channel where $U_{\Kappa_1}$ is invertible from $V$, i.e., $H(U_{\Kappa_1}|V)=0$ for all possible input distributions.
Then, the capacity region of the many-to-one interference channel is given by (\ref{eqn:ach1})-(\ref{eqn:ach3}).}
\end{corollary}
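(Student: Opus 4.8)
The plan is to show that, under the hypothesis $H(U_{\Kappa_1}|V)=0$, the outer bound of Theorem~\ref{thm:outerbound} and the achievable region of Theorem~\ref{thm:ach} coincide, so that the capacity region is squeezed between two identical polytopes. Since the first two families of constraints, (\ref{eqn:outerbound1})--(\ref{eqn:outerbound2}) and (\ref{eqn:ach1})--(\ref{eqn:ach2}), are already identical, it suffices to compare the two versions of the third (sum-rate) constraint; equivalently, for every subset $\mathcal{S}\subseteq\Kappa_1$ and every product distribution $\prod_{i\in\Kappa}p_{X_i}$, I would show
$$ H(Y_{\mathcal{S}}\mid V, U_{\mathcal{S}^c}) = \sum_{i\in\mathcal{S}} H(Y_i\mid U_i), $$
which forces (\ref{eqn:outerbound3}) and (\ref{eqn:ach3}) to agree. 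Once the per-distribution polytopes match, so do their convex hulls, and the capacity region equals (\ref{eqn:ach1})--(\ref{eqn:ach3}).

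First I would establish a bijection between $V$ and $U_{\Kappa_1}$. By the defining properties of the auxiliary variables, $V$ is a deterministic function of $U_{\Kappa_1}=(U_2,\ldots,U_K)$; the hypothesis $H(U_{\Kappa_1}|V)=0$ supplies the converse, namely that $U_{\Kappa_1}$ is a deterministic function of $V$. Hence conditioning on $V$ is equivalent to conditioning on the whole tuple $U_{\Kappa_1}$, and in particular $H(Y_{\mathcal{S}}\mid V, U_{\mathcal{S}^c}) = H(Y_{\mathcal{S}}\mid U_{\Kappa_1})$, since $U_{\mathcal{S}^c}$ is already a subvector of $U_{\Kappa_1}$.

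Next I would invoke independence. Because the inputs factor as $\prod_{i\in\Kappa}p_{X_i}$ and, in the deterministic channel, each $Y_i$ and each $U_i$ is a function of $X_i$ alone, the pairs $(Y_i,U_i)$ are mutually independent across $i$. Expanding $H(Y_{\mathcal{S}}\mid U_{\Kappa_1})$ by the chain rule and dropping, at each step, the irrelevant (independent) conditioning on $U_j$ with $j\neq i$ and on the already-revealed $Y_j$, each term collapses to $H(Y_i\mid U_i)$, giving $H(Y_{\mathcal{S}}\mid U_{\Kappa_1})=\sum_{i\in\mathcal{S}}H(Y_i\mid U_i)$. This is exactly the right-hand term of (\ref{eqn:ach3}), so the two sum-rate constraints coincide.

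The argument is essentially bookkeeping once the bijection is in hand, so I do not expect a serious obstacle; the only point requiring care is the independence decomposition in the last step, where one must confirm that conditioning on $U_{\Kappa_1}$ decouples across the coordinates of $Y_{\mathcal{S}}$. This relies crucially on the determinism of the channel (so that $Y_i,U_i$ depend on no input other than $X_i$) together with the product form of the input distribution. With that, the inner and outer bounds match for every product distribution, and taking convex hulls yields the claimed capacity region.
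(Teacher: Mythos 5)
Your proposal is correct and follows essentially the same route as the paper's proof: both arguments reduce the claim to showing that the right-hand sides of (\ref{eqn:ach3}) and (\ref{eqn:outerbound3}) coincide, and both establish this equality from the same two facts, namely that $V$ and $U_{\Kappa_1}$ determine each other (the defining property of the auxiliaries plus the hypothesis $H(U_{\Kappa_1}|V)=0$) and that the pairs $(Y_i,U_i)$ are mutually independent across $i\in\Kappa_1$ under product input distributions. The only difference is cosmetic: you transform the outer-bound term into the achievable one, while the paper runs the identical chain of equalities in the opposite direction.
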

\begin{proof}
Note that it is sufficient to show that the right hand sides of (\ref{eqn:ach3}) and (\ref{eqn:outerbound3}) are equal. We show this below.
\allowdisplaybreaks{
\begin{align}
&H(Y_1|{U}_{\mathcal{S}^c})+ \sum_{i \in \mathcal{S}} H(Y_i|U_{i}) \nonumber \\
&= H(Y_1|{U}_{\mathcal{S}^c})+ H(Y_{\mathcal{S}}|{U}_{\mathcal{S}}) \label{eq:indep3}\\
&= H(Y_1|{U}_{\mathcal{S}^c})+ H(Y_{\mathcal{S}}|{U}_{\mathcal{S},} {U}_{\mathcal{S}^{c}})\label{eq:indep4}\\
&= H(Y_1|{U}_{\mathcal{S}^c})+ H(Y_{\mathcal{S}}|V,{U}_{\mathcal{S}}, {U}_{\mathcal{S}^{c}})\\
&= H(Y_1|{U}_{\mathcal{S}^c})+ H(Y_{\mathcal{S}}|V,{U}_{\mathcal{S}^{c}}). \label{eq:invertibility}
\end{align}}
In (\ref{eq:indep3}),(\ref{eq:indep4}), we have used the fact that $(Y_{i},U_{i})$ is independent of $Y_{j},U_{j}$ for $i \neq j, i,j \in \mathcal{\Kappa}_1$. We have also used the fact that ${U}_{\mathcal{S}}$ is invertible from $V$ in the final equation above.
\end{proof}

Note that $U_i$ can be interpreted as the effective interference caused by Transmitter $i \neq 1$ at Receiver $1$. Also, note that with any achievable coding scheme in this channel, Receiver $1$ can decode $X_1^T$, and because of (\ref{eqn:invertible}), invert $V^T$ in this channel. The condition that ${U}_{\Kappa}$ is invertible from $V$ means that all the interfering signals $U_i^T$ are resolvable at the first receiver, and hence alignment is precluded irrespective of the coding scheme used. Therefore, not surprisingly, the random coding achievable scheme of Theorem \ref{thm:ach} is optimal for this class of channels.

It must be noted that, the achievable schemes of Theorem \ref{thm:ach} and Corollary \ref{cor:noisy} are both \emph{different} random coding schemes. The schemes differ, in particular, in the decoding procedure at Receiver $1$ and hence achieve different rates. In the achievable scheme of Theorem \ref{thm:ach}, the rate region achieved is with Receiver $1$ picking the sequence $X_1^T$ such that $Y_1^T,{U}_\Kappa^T,X_1^T$ are jointly typical (See Appendix \ref{app:ach}). In contrast, in the decoding scheme for Corollary \ref{cor:noisy},  the sequence $X_1^T$ is decoded as the one such that $(Y_1^T,X_1^T)$ are jointly typical, i.e., the interference is treated as noise.

\subsection{Discussion : How many bits of additional rate can interference alignment provide?}
\label{sec:discussion1}
The achievable scheme of Theorem \ref{thm:ach} does not involve interference alignment and is therefore optimal, only when alignment is precluded on the many-to-one interference channel. The resolvability condition of the channel described in Corollary \ref{cor:resolvable} above is precisely one where alignment is precluded. However, in general, if the resolvability condition is not satisfied, then on comparing (\ref{eqn:outerbound1})-(\ref{eqn:outerbound3}) with (\ref{eqn:ach1})-(\ref{eqn:ach3}), we can conclude that an additional rate of $\Delta_{\mathcal{S}}=H(Y_{\mathcal{S}}|V,{U}_{\mathcal{S}^c})-H(Y_{\mathcal{S}}|{U}_{\mathcal{S}})$ should be achieved by alignment for the users belonging in $\mathcal{S}$ for the outer bound to be tight. It is not clear whether this additional rate can be achieved at all, in general, or whether the outer bound is loose. However, the results on the noisy interference regime imply that if the many-to-one interference channel is weak, then this additional rate can be achieved using interference alignment via random coding, and the outer bound is tight in a sum-capacity sense (Compare expression of Corollary \ref{cor:noisy} with (\ref{eqn:outerbound3})). In other words, the extent of alignment is optimal in the sense that the additional rate benefit provided by alignment is the maximum possible. If the channel is not weak, then $\Delta_{\mathcal{S}}$ can be interpreted as a bound on the amount of additional rate that can be obtained via alignment for the users in $\mathcal{S}\subseteq \mathcal{K}_1$.

\subsection{An open question : When does a channel have a single-letter capacity characterization?}
A clear open problem motivated by this work is a capacity characterization of deterministic many-to-one interference channels. This question is particularly intriguing because it is not clear whether the channel allows a single-letter capacity characterization. Previous works on approximating the capacity of the channel motivate the need of structured (lattice) coding based achievable schemes \cite{Bresler_Parekh_Tse}. It has been discussed in \cite{Philosof_Zamir} that for channels where structured codes are necessary, single-letter characterizations may not exist. This is because coding schemes such as linear and lattice codes introduce structure as correlations in multiple uses of the channel. Interestingly, single-letter based lattice coding schemes (i.e., single-dimensional lattices) are shown to suffice for a degrees of freedom characterization of almost all interference channels in \cite{Motahari_DOFint}; however, it has been argued multi-dimensional lattices are useful for finer characterizations of capacity \cite{Jafarian_Jose_Vishwanath_algebraic}. Thus, the question of existence of single-letter characterizations of interference channels, and more general wireless networks remains wide open. The issue of existence of single-letter capacity characterizations also appears in several broadcast channel scenarios. The study of degrees of freedom of compound broadcast networks \cite{Weingarten_Shamai_Kramer, Gou_Jafar_Wang} suggests the possibility of alignment and structured coding in the channel, whereas, for certain degraded settings, the broadcast (multicast) channel allows single-letter capacity characterizations (See \cite{Nair_El_Gamal} and references therein).
Thus, an important open question in network information theory is a better understanding of structured codes, and its impact on capacity characterizations of discrete memoryless channels. 

\section{Conclusion}
%The idea of interference alignment is closely connected with structured coding in wireless networks. This is because, random codes do not align in general, and alignment needs structure in the codewords. However, the ideas of structured coding and interference alignment are separate ideas. While interference alignment has been mainly explored in the framework of structured (linear/lattice) codes, alignment refers to the general idea that interfering signals overlap. 
We generalize the noisy interference regimes, previously shown in average-power constrained SISO Gaussian interference channels, to the discrete memoryless many-to-one interference channel. In this noisy interference regime, random coding at all transmitters and treating interference as noise at the receiver which faces interference achieves sum-capacity on the many-to-one interference channel. Our generalization enables extension of the noisy interference regimes to the Gaussian MIMO and parallel many-to-one interference channels and the fading Gaussian many-to-one interference channels without CSIT. Unlike previous results which consider an average power constraint on the inputs, we also show that treating interference as noise is optimal in the Gaussian many-to-one interference channel, even if the inputs are constrained to come from fixed finite constellations such as QAM or PSK.
Through the lens of interference alignment, we obtain a better understanding of why random (Gaussian) codewords are sufficient to achieve capacity in the noisy interference regime in the Gaussian interference channels. In particular, we argue that if users transmit, using random coding, at rates higher than the interfering link's mutual information, then the interference is noisy and the extent of alignment is maximum. Such alignment hence obviates the need for techniques such as structured (lattice) codes which have been shown to be immensely useful in other regimes. We also show that for deterministic many-to-one interference channels, if the interferers are resolvable at the receiver facing interference, random coding achieves capacity since interference alignment is precluded. While we are able to provide single-letter characterizations for certain classes of channels in this paper, the question of the existence of single letter characterizations for wireless networks, in general, remains open.

\appendices
\section{Equivalence of Our Noisy Interference Regime and that of \cite{Sreekanth_Veeravalli,Motahari_Khandani, Shang_Kramer_Chen} for Single-Antenna Gaussian Many-to-One Interference Channels}
\label{app:equivalence}
We have already shown in the introduction that the noisy interference regime of \cite{Sreekanth_Veeravalli,Motahari_Khandani, Shang_Kramer_Chen} is included in the regime described by our result. We here show that our noisy interference regime is no larger than the regime described in \cite{Sreekanth_Veeravalli,Motahari_Khandani,Shang_Kramer_Chen}. In particular, we will show here that, if the conditions of (\ref{eq:classicalnoisy}) are \emph{not} satisfied, then, the effective interference at Receiver $1$ is {not} a degraded version of the set of all signals at the other receivers. In particular, we will show that, if 
\begin{equation}
\sum_{j=2}^{K} \frac{|H_{1j}|^2}{|H_{jj}|^2} > 1, \label{eq:notnoisy}
\end{equation}
then, for any possible set of values $\rho_i=E[Z_1 Z_i], i=2,3,\ldots,K$ with, there exists an input distributions on $X_i,i=1,2,\ldots,K$ such that 
\begin{equation} I(X_2,X_3,\ldots,X_K; {V}|Y_2,Y_3,\ldots,Y_K) > 0, \label{eq:notnoisy2}\end{equation}
where $V=\sum_{i=2}^{K}H_{1i}X_i+Z_1$.
To see this, note that, since $(Z_2,Z_3,\ldots,Z_K)$ are a set of mutually independent Gaussian random variables, the correlation matrix of $(Z_1,Z_2,\ldots,Z_K)$ can be written as 
$$\left[\begin{array}{cccccc} 1 &\rho_1&\rho_2& \rho_3&\ldots&\rho_K\\
		       \rho_1 & 1 &   0  &0   &\ldots& 0 \\
			\rho_2 & 0 & 1& 0&\ldots&0 \\
			\rho_3 & 0 & 0&1&\ddots&\vdots \\
			\vdots&\vdots&\vdots&\ddots&\ddots&0\\
			\rho_K&0&0&\ldots& 0&1 \end{array} \right].$$
Note that the above matrix has to be a positive semidefinite matrix, whose determinant is non-negative. This implies that 
$$\sum_{i=2}^{K}\rho_{i}^2 \leq 1$$
(\ref{eq:notnoisy}) and the above equation together imply that there exists $i_0 \in \{2,3,\ldots,K\}$ such that
$\frac{|H_{1i_0}|}{|H_{i_0i_0}|} > |\rho_{i_0}|$.  We are now ready to provide the input distribution for which (\ref{eq:notnoisy2}) is satisfied. Consider the input distribution where $X_i=0$ deterministically for all $i \in \{2,3,\ldots,K\}-\{i_0\}$, and $X_{i_0}$ is a circularly symmetric zero-mean Gaussian random variable having some positive (non-zero) variance. Then, we can write 
\begin{align}&I(X_2,X_3,\ldots,X_K; V|Y_2,Y_3,\ldots,Y_K) \nonumber \\&= I(X_{i_0}; H_{1i_0}X_{i_0}+Z_1|Z_1,Z_2,\ldots,Z_{i_0-1},H_{i_0i_0}X_{i_0}+Z_{i_0},Z_{i_0+1},\ldots,Z_{K}) \nonumber\\ 
 &\geq I(X_{i_0}; H_{1i_0}X_{i_0}+Z_1|H_{i_0i_0}X_{i_0}+Z_{i_0}) \label{eq:degraded1}\\&\geq 0 \label{eq:degraded2} \end{align}
where (\ref{eq:degraded1}) follows from the independence of $X_{i_0}$ and $Z_i, i=\{2,3\ldots,K\}$. In (\ref{eq:degraded2}), the equality is satisfied (for Gaussian $X_{i0}$) only if $|\rho_{i_0}|=\frac{|H_{1i_0}|}{|H_{i_0i_0}|}$. However this condition is not satisfied for our choice of $i_0$, and we have, for this input distribution, $I(X_2,X_3,\ldots,X_K; V|Y_2,Y_3,\ldots,Y_K) > 0$, i.e., the interference at Receiver $1$ is \emph{not} a degraded version of the outputs at the other receivers. This shows the equivalence between our characterization of the noisy interference regime and the characterization in \cite{Sreekanth_Veeravalli, Motahari_Khandani,Shang_Kramer_Chen}. It must be noted that we do not claim if (\ref{eq:notnoisy}) is satisfied, that the interference is not noisy - we only claim that it lies outside our characterization of the noisy interference regime. Whether the noisy interference regime characterized by us can be expanded is an interesting open question.
\section{Proof of Lemma \ref{lemma:extremal}}
\label{app:extremal}
From the Markov chain property on $A,B,C$, we can write 
\begin{eqnarray*}
I(A^{T};B^{T}) &=&I(A^{T};B^{T},C^{T})\\
I(A^{T};B^{T}) &=&I(A^{T};C^{T})+I(A^{T};B^{T}|C^{T})\\
\Rightarrow  h(B^{T})-h(C^{T}) &=&  h(B^{T}|C^{T})+ h(B^{T}|A^{T})-h(C^{T}|A^{T})-h(B^{T}|C^{T},A^{T})\\
&\stackrel{(a)}{\leq}& \sum_{\tau=1}^{T}  h(B(\tau)|C(\tau))+ h(B(\tau)|A(\tau))-h(C(\tau)|A(\tau))-h(B(\tau)|C(\tau),A(\tau))\\
&=& \sum_{\tau=1}^{T}  h(B(\tau)|C(\tau))+ h(B(\tau)|A(\tau))-h(C(\tau),B(\tau)|A(\tau))\\
&=& \sum_{\tau=1}^{T}  h(B(\tau)|A(\tau))+ h(B(\tau), C(\tau))-h(C(\tau))-h(C(\tau),B(\tau)|A(\tau))\\
&=& \sum_{\tau=1}^{T}  h(B(\tau)|A(\tau))+ I(B(\tau), C(\tau);A(\tau))-h(C(\tau))\\
&=& \sum_{\tau=1}^{T}  h(B(\tau))-I(B(\tau);A(\tau))+ I(B(\tau), C(\tau);A(\tau))-h(C(\tau))\\
&\stackrel{(b)}{=}& \sum_{\tau=1}^{T}  h(B(\tau))-h(C(\tau)),
\end{eqnarray*}
where in $(a)$, we have used the chain rule and the fact that conditioning cannot increase differential entropy in the first two terms, and the following facts in the final two terms above  $$p(C^{T}|A^{T})=\prod_{\tau=1}^{T} p(C(\tau)|A(\tau)), $$  $$p(B^{T}|C^{T},A^{T})=\frac{p(B^{T},C^{T}|A^{T})}{p(C^{T}|A^{T})}=\prod_{\tau=1}^{T} \frac{p(B(\tau),C(\tau)|A(\tau))}{p(C(\tau)|A(\tau))} = \prod_{\tau=1}^{T} p(B(\tau)|C(\tau),A(\tau)).$$
In $(b)$, we have used the Markov chain property on $A,B,C$ which implies that $I(A(\tau);B(\tau),C(\tau))=I(A(\tau);B(\tau)).$ Now, if $p_{B|A},p_{C|B}$ are both Gaussian, then the fact that the Gaussian input distribution on the $A(\tau)$ with the appropriate covariance matrix maximizes $h(B(\tau)|C(\tau))$ in step $(a)$, combined with the convexity of entropy implies that i.i.d Gaussian distribution on the input maximizes $h(B^{T})-h(C^{T})$.

\section{Proof of Theorem \ref{thm:noisy}}
\label{app:noisy}
Achievability of the required rate follows trivially from typical set decoding arguments. We prove the converse here.
We prove the converse for the case the differential entropies of ${Y}_k$ and $Y_k|X_k$ exist. The proof for the case the corresponding entropy terms exist, rather their differential entropies, is essentially identical, with the mutual information expressed in terms of the entropy of the variables.
Consider any coding scheme of length $T$ achieving rates $R_i$ for user $i$. Then, from Fano's inequality, for any $\epsilon > 0$, we can write 
\begin{eqnarray}
T(R_1-\epsilon/K) &\leq& I(Y_1^T;X_1^T) \nonumber\\
&\leq& h(Y_1^T)-h(Y_1^T|X_1^T)\nonumber\\
&=& h(Y_1^T)-h(V^T)\nonumber\\
&\leq& \sum_{\tau=1}^{T}h(Y_1(\tau))-h(V^T),\label{eq:R1}
\end{eqnarray}
where we have used (\ref{eqn:invertible}) above. Now, note that since the capacity of the channel only depends the marginal distributions $p_{V|X_{\Kappa_1}}, p_{Y_i|X_i}, i \in \Kappa_1$, we can make $V$ a physically degraded version of $Y_{\Kappa_1}$ so that $X_i^T \rightarrow Y_i^T \rightarrow V^T$ without changing the capacity. We can now write, for $i \in \mathcal{K}_1$. 
\begin{eqnarray}
T(R_i-\epsilon/K) &\leq& I(Y_i^T;X_i^T)\nonumber\\
&=& h(Y_i^T) - h(Y_i^T| X_i^T)\nonumber\\
&=& h(Y_i^T) - \sum_{\tau=1}^{T}h(Y_i(\tau)| X_i(\tau))\nonumber\\
\Rightarrow T\left(\sum_{i \in \mathcal{K}_1} R_i - (K-1)\epsilon/K\right) &\leq& \sum_{i \in \mathcal{K}_1} \bigg(h(Y_i^T) - \sum_{\tau=1}^{T}h(Y_i(\tau))|X_i(\tau))\bigg)\nonumber\\
 &\leq& h\left(Y_{\Kappa_1}^T\right) - \sum_{i \in \mathcal{K}_1} \sum_{\tau=1}^{T} h\left(Y_i(\tau)|X_i(\tau)\right) \label{eq:independent1}.
\end{eqnarray}
where the final equation follows from the fact that $Y_{i}^T$ is independent of $Y_j^T$ for all $i \neq j$.
Adding (\ref{eq:R1}) and (\ref{eq:independent1}), we get
\allowdisplaybreaks{
\begin{align}
  & T\left(\sum_{i \in \mathcal{K}} R_i - \epsilon\right)\\ &\leq \sum_{\tau=1}^{T}h(Y_1(\tau)) - h(V^T)+h(Y_{\Kappa_1}^T)  - \sum_{i \in \mathcal{K}_1} \sum_{\tau=1}^{T} h(Y_i(\tau)|X_i(\tau)) \\
   &\leq \sum_{\tau=1}^{T}h(Y_1(\tau))+ \sum_{\tau=1}^{T} \big(h(Y_{\Kappa_1}(\tau)) - h(V(\tau))\big) - \sum_{i \in \mathcal{K}_1} \sum_{\tau=1}^{T} h(Y_i(\tau)|X_i(\tau)) \label{eq:fromextremal}\\
 &= \sum_{\tau=1}^{T}\bigg(h(Y_1(\tau))-h(V(\tau)) +\sum_{i \in \mathcal{K}_1} \big(h(Y_i(\tau)) - h(Y_i(\tau)|X_i(\tau))\big)\bigg) \\
 &= \sum_{\tau=1}^{T}\bigg(I(Y_1(\tau);X_1(\tau))+ \sum_{i \in {\Kappa_1}} I(X_i(\tau);Y_i(\tau))\bigg)\\
 &= \sum_{\tau=1}^{T}\sum_{i\in \Kappa} I(Y_i(\tau);X_i(\tau))\\
&\leq T \max_{\tau} \sum_{i\in \Kappa} I(X_i(\tau);Y_i(\tau))\\
&\leq T \max \sum_{i \in \Kappa} I(X_i;Y_i)
\end{align}
}
\noindent where (\ref{eq:fromextremal}) follows from the fact that $X_{\Kappa_1}^T \rightarrow Y_{\Kappa_1}^T \rightarrow V^T$ combined with the result of Lemma \ref{lemma:extremal}. The outer bound hence follows.

\section{Proof of Corollary \ref{cor:Gaussian}}
\label{app:Gaussian}
The condition that $\sum_{i \in \Kappa_1} \Lambda_{i} \preceq I_{N_1}$ implies that a Gaussian random $N_1$ dimensional vector $Z$ can be found so that $V=\sum_{i \in \Kappa_1} U_{i}+Z$ and $Y_1=X_1+V$, which means that $V$ is a degraded version of $U_{\Kappa_1}$. This fact, combined with the condition that $U_i$ is degraded version of $Y_i$ implies that $V$ is a degraded version of $Y_{\Kappa_1}$ as required by Theorem \ref{thm:noisy}. The optimality of random coding and treating interference as noise hence follows from the theorem.  Here, we only need to show that Gaussian inputs are optimal, when $\mathcal{X}_i = \mathbb{C}^{M_i}$, and there is a power constraint on the inputs. Consider any achievable scheme where $E\left(\frac{1}{T} \sum_{\tau=}^{T} X_i(\tau) X_i(\tau)^{\dagger} \right) = \Gamma_i$. Then, following the steps of the proof of Theorem \ref{thm:noisy}, we can derive equation (\ref{eq:fromextremal}), which is reproduced below.
\begin{eqnarray*}T\left(\sum_{i \in \Kappa} R_i - \epsilon \right) &\leq& \sum_{\tau=1}^{T}h(Y_1(\tau))+ \sum_{\tau=1}^{T} \big(h(Y_{\Kappa_1}(\tau)) - h(V(\tau))\big) - \sum_{i \in \mathcal{K}_1} \sum_{\tau=1}^{T} h(Y_i(\tau)|X_i(\tau)) 
\end{eqnarray*}
Here, we can use the Gaussian distribution to evaluate each of the entropy terms above. To see this, we invoke the convexity of entropy, and the fact that under a covariance matrix constraint, the Gaussian distribution maximizes entropy in the first term above. We can use Lemma \ref{lemma:extremal} which shows that the use of the Gaussian distribution to evaluate the second and third terms above, outer bounds the terms. The final entropy term is evaluated using Gaussian because of the definition of the channel. Thus, if we have a power constraint, we are restricted to the set of all Gaussian distributions on the input at Transmitter $i$ with covariance $\Gamma_i$, where $\textrm{tr}(\Gamma_i) \leq P_i$.

\section{Proof of Lemma \ref{lemma:separable}}
\label{app:separable}
\begin{proof}
The proof is similar to the proof of Lemma \ref{lemma:separable}. We only highlight the differences here. We have $\vec{A}^{T} \rightarrow \vec{B}^{T} \rightarrow \vec{C}^{T}$
\begin{eqnarray*} I(\vec{A}^{T};\vec{B}^{T})&=& I(\vec{A}^T; \vec{B}^{T} | \vec{C}^T)\\
\Rightarrow  h(\vec{B}^{T})-h(\vec{C}^{T}) &=&  h(\vec{B}^{T}|\vec{C}^{T})+ h(\vec{B}^{T}|\vec{A}^{T})-h(\vec{C}^{T}|\vec{A}^{T})-h(\vec{B}^{T}|\vec{C}^{T},\vec{A}^{T})\\
&\stackrel{(c)}{\leq}& \sum_{k=1}^{F}\sum_{\tau=1}^{T}  h\left(B^{(k)}(\tau)|C^{(k)}(\tau)\right)+ h\left(B^{(k)}(\tau)|A^{(k)}(\tau)\right) \nonumber\\&&-h\left(C^{(k)}(\tau)|A^{(k)}(\tau)\right)-h\left(B^{(k)}(\tau)|C^{(k)}(\tau),A^{(k)}(\tau)\right)\\
&=& \sum_{k=1}^{F}\sum_{\tau=1}^{T}  h\left(B^{(k)}(\tau)\right)-h\left(C^{(k)}(\tau)\right),
\end{eqnarray*}
where in $(c)$, we have used the chain rule and the fact that conditioning cannot increase differential entropy in the first two terms, and the definition of the channels $p_{B|A}^{(k)}, p_{C|B}^{(k)}$ in the final two terms above, similar to Lemma \ref{lemma:extremal}. The arguments for the derivation of the final step from $(c)$, and the optimality of Gaussian inputs if the channels are Gaussian, are identical to the proof of Lemma \ref{lemma:extremal} and hence, omitted here. 
\end{proof}

\section{Proof of Theorem \ref{thm:outerbound}} 
\label{app:outerbound}
The bounds (\ref{eqn:outerbound1}) and (\ref{eqn:outerbound2}) are trivial. We only need to show (\ref{eqn:outerbound3}). Consider any achievable scheme. Let $T$ be the length of the code. Consider any $\mathcal{S} \subseteq \Kappa_1$. Then, for $i \in \mathcal{S}$, from Fano's inequality we can write for any $\epsilon > 0$,
\allowdisplaybreaks{
\begin{eqnarray}
T(R_i-\epsilon) &\leq& H(Y_i^T),\\
\Rightarrow T\left(\sum_{i\in \mathcal{S}} R_i-|\mathcal{S}|\epsilon\right) 
&\leq& \sum_{i\in\mathcal{S}} H(Y_i^T)= H(Y_{\mathcal{S}}^T) = H(Y_{\mathcal{S}}^T|{U}_{\mathcal{S}^{c}}), \label{eqn:transient1}
\end{eqnarray}
where, in the final two equations, we have used the fact that $(Y_i,{U}_i)$ is independent of $(Y_j,{U}_j)$ for $j \neq i, (i,j) \in \Kappa_1$. Now, using Fano's inequality for $W_1$, we get
\begin{eqnarray}
T(R_1-\epsilon) &\leq& I(Y_1^T;X_1^T)\\
	&\leq& I(Y_1^T,{U}_{\mathcal{S}^c}^T;X_1^T)\\
	&=& I(Y_1^T;X_1^T|{U}_{\mathcal{S}^c}^T)\\
 &=& H(Y_1^T|{U}_{\mathcal{S}^c}^T)-H(Y_1^T|X_1^T,{U}_{\mathcal{S}^c}^T)\\
 &\leq& TH(Y_1|{U}_{\mathcal{S}^c})-H(V^T|{U}_{\mathcal{S}^c}^T) \label{eqn:transient2},
\end{eqnarray}
where, we have used (\ref{eqn:invertible}), and the convexity of the conditional entropy function above. Summing (\ref{eqn:transient1}) and (\ref{eqn:transient2}), we get 
\begin{eqnarray}
T\left(R_1+\sum_{i\in \mathcal{S}} R_i-K \epsilon\right) &\leq& TH(Y_1|{U}_{\mathcal{S}^c})+ H(Y_{\mathcal{S}}^T|{U}_{\mathcal{S}^c}^T)-H(V^T|{U}_{\mathcal{S}^c}^T)\\
%&\stackrel{(a)}{=}& NH(Y_1|\overline{U}_{S^c})+ N\sum_{i \in \mathcal{S}} H(Y_i|V_i)+H(V_{\mathcal{S}}^N|V_{\mathcal{S}^c}^N)-H(V^N|V_{\mathcal{S}^c}^N) \\
&{\leq}& TH(Y_1|{U}_{S^c})+ H(Y_\mathcal{S}^{T}|V^T,{U}_{\mathcal{S}^c}^T) \label{eq:degraded}
%&\leq& TH(Y_1|{U}_{S^c})+ T\sum_{i \in \mathcal{S}} H(Y_i|V).
\end{eqnarray}
In (\ref{eq:degraded}), the fact that for any arbitrary variables $A,B,C$, $H(A|C)-H(B|C) \leq H(A|B,C)$. Dividing the final equation by $T$ and taking $T \to \infty$, we get the desired bound.

\section{Proof of Theorem \ref{thm:ach}} 
\label{app:ach}

We provide a random coding achievable scheme along the lines of reference \cite{Cadambe_Jafar_Vishwanath}, which studied the deterministic $Z$ channel. Without loss of generality, let us assume that $\mathcal{W}_i = \{1,2,\ldots,2^{TR_{i}}\}$. Consider any product distribution $\prod_{i \in \Kappa} p_{X_i}(x_i)$. 
\subsubsection*{Encoding}
The first transmitter generates $2^{TR_{1}}$ independent codewords ${X}_1^{T}$ generating each element i.i.d according to $p_{X_1}(x_1)$. Let the generated sequences be denoted by $X_1^{T}(m), m \in \{1,2,\ldots,2^{TR_1}\}$. Then the message $W_{1}=m$ is encoded using $X_1^T(m)$.
Now consider Transmitter $i \in \mathcal{\Kappa}_1$. Note that $p_{X_i}(x_i)$ along with the channel induces $p_{X_i,U_{i},Y_i}(x_i,u_{i},y_i)$ from which marginal distributions $p_{U_{i}}(u_{i}),p_{Y_i}(y_i)$ and $p_{U_{i},Y_i}(u_{i},y_i)$ can be calculated. The transmitter generates $2^{T\Omega_i}$ sequences of $U_{i}^T$, each sequence generated independently and i.i.d according to $p_{U_{i}}(u_{i})$, where $\Omega_i > 0$. We denote the $m$th  sequence generated as $U_{i}^{T}(m)$, where $m \in \{1,2,\ldots,2^{T \Omega_i}\}$. The transmitter also generates $2^{TH(Y_i)}$ sequences of $Y_i^T$, each sequence generated independently and i.i.d according to $p_{Y_i}(y_i)$. These sequences of $Y_i^T$ are distributed uniformly into $2^{TR_i}$ bins. To encode the $m$th message, the transmitter picks a $Y_i^T$ sequence in the $m$th bin, such that it is jointly typical with $U_{i}^T(m_i)$ for some $m_i \in \{1,2,\ldots, 2^{T\Omega_i}\}$. If no such sequence is found, then an error is declared. Otherwise, the message is encoded using the $X_i^T$ which generates the $(U_{i}^{T},Y_i^{T})$ sequence picked. The existence of such a $X_i^{T}$ sequence is guaranteed, because the channel is deterministic and the pair $(U_{i}^{T}, Y_i^{T})$, by virtue of being jointly typical, has a non-zero probability of occurrence.

\emph{Remark:}  The encoding strategy at Transmitter $2$ is similar to the optimal coding strategy over the deterministic broadcast channel \cite{Cover_comments}.
\subsubsection*{Decoding Strategy }
Receiver $1$, on receiving $Y_1^T$, chooses the unique index $W_{1}=m$, such that $$(Y_1^T,{U}_2^T(m_2), U_{3}^T(m_3),\ldots,U_{K}^T(m_K), X_1^T(m))$$ is jointly typical, for some $m_i \in \{1,2,\ldots,2^{\Omega_i}\}$ for $i=2,3,\ldots,K.$  An error is declared if no such unique index $m$ is found. Receiver $i \in \mathcal{K}_1$ can decode $W_i$ using the bin-index of the received $Y_i^T$ sequence. Note that since the channel is deterministic, there are no errors at Receiver $i \neq 1$, if the encoding at Transmitter $i$ is successful.
\subsubsection*{Error Analysis and Achieved rate}
Since the coding scheme is symmetric over all messages, we will analyze the probability of error assuming $W_{i}=1$ is encoded at transmitter $i \in \Kappa$; because of symmetry in the coding scheme, the probability of error of encoding this set of messages gives the probability of error averaged over all messages.  Now, we divide the possible set of errors into two types : errors at Transmitter $i \in \mathcal{\Kappa}_1$ and errors at Receiver $1$.

For $i \in \Kappa_1$, let 
$$ E_i(m_j)=\left\{\begin{array}{c}\textrm{At Transmitter $i$, $Y_1^T$ belongs to the $m_j$th bin, $U_i^T=U_i^T(m)$} \\ \textrm{ for some } m \in \{1,2,\ldots,2^{T\Omega_i}\} \Rightarrow (U_i^T,Y_i^T) \notin \mathcal{A}_{\epsilon}(U_i,Y_i)\end{array}\right\}$$
where $\mathcal{A}_{\epsilon}(U_i,Y_i)$ represents the $\epsilon$-jointly typical set of $(U_i^T,Y_i^T)$ pairs. Note that $E_i(m_j)$ corresponds to the event that no jointly typical pair $Y_i^T,U_i^T$ was found in the $m_j$th bin at Transmitter $i$ when encoding $W_{i}=m_j$. The overall probability of error can now be expressed as 
$$ P_e^{T} = \sum_{i\in \Kappa_1} \Pr(E_i(1)) + \Pr(\textrm{Decoding error at receiver }1| E_2^{c}(1),E_3^{c}(1),\ldots, E_{K}^{c}(1)).$$
Now, consider Receiver $1$. Note that, at this receiver, the decoding procedure and hence the error events are very similar, in nature, to the errors that can occur over a multiple access channel (MAC), when the asymptotically optimal typical set coding procedure is used. The only difference is that, in this particular case, the receiver is only interested in one message, i.e, $W_1$, which reduces the number of possible error events as compared to the classical MAC. Now, given that the message $W_{i}=1$ is encoded at Transmitter $i$ for $i \in \Kappa$, and no errors occurred at the transmitter, a sequence $U_{i}^{T}$ is found at the transmitter such that it is jointly typical with a $Y_i^T$ sequence in the first bin. Let us assume, without loss of generality, that this sequence found is $U_{i}^T(1)$, i.e., $U_i^T(1)$ is used in the encoding procedure at Transmitter $i$, for $i \neq 1$. If $E_i^{c}(1)$ occurs for $i \neq 1$, then, because of the deterministic nature of the channel, the received sequence $Y_1^T$ is jointly typical with $U_2^T(1),U_3^T(1),\ldots, U_K^T(1),X_1^T(1)$. Errors can occur if $Y_1^T(1)$ is jointly typical with $U_2^T(m_2),U_3^T(m_3),\ldots,U_K^T(m_K),X_1^T(m_1)$ for some $m_1 \neq 1$, $m_i \in \{1,2,\ldots, 2^{T \Omega_i}\}$ for $i \in \Kappa_1$. We wish to evaluate the probability of occurrence of this event. Let us define 
$$ E_1(m_1,m_2,m_3,\ldots,m_K)=\{\textrm{ $Y_1^T, U_2^T(m_2),U_3^T(m_3), \ldots, U_K^T(m_K), X_1^T(m_1)$ is jointly typical}\}$$
By the union bound, we can bound the error at Receiver $1$ as
\begin{align*}
&\Pr(\textrm{Error at Receiver 1}|E_2^{c}(1),E_3^{c}(1),\ldots,E_K^{c}(1))\\ 
&\leq \sum_{m_1=2}^{2^{TR_1}} \sum_{m_i \in \{1,2,\ldots,2^{\Omega_i}\}, i \in \Kappa_1} \Pr(E_1(m_1,m_2,\ldots,m_K))\\
&= \sum_{m_1=2}^{2^{TR_1}}\sum_{\mathcal{S} \subseteq \Kappa_1} \sum_{ \substack{m_i\neq 1, i \in \mathcal{S},\\ m_i=1, i \in \mathcal{S}^{c}}} \Pr(E_1(m_1,m_2,\ldots,m_K))\\
&= \sum_{\mathcal{S} \subseteq \Kappa_1} \sum_{m_1=2}^{2^{TR_1}} \sum_{ \substack{m_i\neq 1, i \in \mathcal{S},\\ m_i=1, i \in \mathcal{S}^{c}}} \Pr(E_1(m_1,m_2,\ldots,m_K)
\end{align*}
Now, the overall error probability can be bounded as 
\begin{equation} P_e^{T} \leq  \sum_{\mathcal{S} \subseteq \Kappa_1} \sum_{m_1=2}^{2^{TR_1}}\sum_{ \substack{m_i\neq 1, i \in \mathcal{S},\\ m_i=1, i \in \mathcal{S}^{c}}} \Pr\left(E_1(m_1,m_2,\ldots,m_K)\right)+\sum_{i \in \Kappa_1} \Pr(E_i(1)).\label{eq:errors}\end{equation}
It has been shown in \cite{Gamal_Van} that if, for $i\neq 1$,
\begin{eqnarray}R_i &\leq& H(Y_i) \label{eq:parametric1}\\
R_i &\leq& \Omega_i + H(Y_i)-I(Y_i;U_i)\nonumber\\
&=& \Omega_i+H(Y_i|U_i) \label{eq:parametric2},
\end{eqnarray}
then, asymptotically as $T \to \infty$, the probability of $E_i(1)$ occurring vanishes.
Now, we estimate the remaining term in (\ref{eq:errors}) below. Let $\mathcal{A}_\epsilon$ denote the set of all $\epsilon$-jointly typical sequences of $(Y_1^T,U_2^T,\ldots,U_K^T,X_1^T).$ Then, for any $\mathcal{S} \subseteq \Kappa_1$, we can write
\begin{align*}
&\sum_{m_1=2}^{2^{TR_1}}  \sum_{ \substack{m_i\neq 1, i \in \mathcal{S},\\ m_i=1, i \in \mathcal{S}^{c}}}\Pr(E_1(m_1,m_2,\ldots,m_K)) \\
&\leq \sum_{m_1=2}^{2^{TR_1}}  \sum_{ \substack{m_i\neq 1, i \in \mathcal{S},\\ m_i=1, i \in \mathcal{S}^{c}}}\Pr\left((Y_1^T,U_2^T(m_2),U_3^T(m_3),\ldots,U_K^T(m_K), X_1^T(m_1)), \in \mathcal{A}_\epsilon\right)\\
&\leq \sum_{m_1=2}^{2^{TR_1}}  \sum_{ \substack{m_i\neq 1, i \in \mathcal{S},\\ m_i=1, i \in \mathcal{S}^{c}}} \sum_{(Y_1^T, U_{2}^T,U_{3}^T,\ldots,U_{K}^T,X_1^T)\in \mathcal{A}_{\epsilon}} \Pr\left(Y_1^T|U_{\mathcal{S}^c}^T\right) \Pr\left(X_1^T(m_1)=X_1^T\right) \prod_{j \in \Kappa_1} \Pr\left(U_i^T(m_j)=U_i^T\right) \\
&\stackrel{(a)}{\leq} \sum_{m_1=2}^{2^{TR_1}}  \sum_{ \substack{m_i\neq 1, i \in \mathcal{S},\\ m_i=1, i \in \mathcal{S}^{c}}} |\mathcal{A}_{\epsilon}| 2^{-TH(Y_1|U_{\mathcal{S}^c})+\epsilon} 2^{-TH(U_{\mathcal{K}_1})-TH(X_1))+K\epsilon} \\
&\leq \sum_{m_1=2}^{2^{TR_1}}  \sum_{ \substack{m_i\neq 1, i \in \mathcal{S},\\ m_i=1, i \in \mathcal{S}^{c}}}2^{-TI(Y_1; U_{\mathcal{S}},X_1|U_{\mathcal{S}^{c}})+(K+2)\epsilon}\\
&= 2^{TR_1} 2^{T \sum_{i \in \mathcal{S}} T \Omega_i} 2^{-TH(Y_1|U_{\mathcal{S}^{c}})+(K+2)\epsilon},
\end{align*}
where, in $(a)$, we have used the fact that $U_j$ is independent $U_{j^{'}}$ for $j \neq j^{'}$.
From the above equation, it can be concluded that, if, 
\begin{equation}
R_1 + \sum_{i \in \mathcal{S}} \Omega_i \leq H(Y_1| U_{\mathcal{S}^c}), \forall \mathcal{S} \subseteq \Kappa_1\label{eq:parametric3}
\end{equation}
then, asymptotically as $T \to \infty$, the probability of error at Receiver $1$ vanishes. Note that if $\mathcal{S}$ is the null-set, the above equation can be equivalently expressed as $R_1 \leq H(Y_1|V)$. Thus, if the rates $R_i, i \in \Kappa$ and the parameters $\Omega_i, i \in \Kappa_1$ satisfy (\ref{eq:parametric1}),(\ref{eq:parametric2}) and (\ref{eq:parametric3}), the rate-tuple $(R_1,R_2,\ldots,R_K)$ is achievable. Eliminating $\Omega_i, i \in \Kappa_1$ from these inequalities using Fourier-Motzkin elimination, we get the achieved rate region in the desired form.
}

\bibliographystyle{ieeetr}
\bibliography{Thesis}

\end{document}